\title{\vspace{-.2cm}The Augustin Center and The Sphere Packing 
\\\vspace{-.15cm} Bound For Memoryless Channels\vspace{-.3cm}}
\author{Bar\i\c{s} Nakibo\u{g}lu\\\!\vspace{-.3cm}\small{\href{mailto:nakib@alum.mit.edu}{nakib@alum.mit.edu}}
\vspace{-.1cm}} 
\theoremstyle{plain}
\newtheorem{lemma}{Lemma} 
\newtheorem{theorem}{Theorem}
\theoremstyle{definition}
\newtheorem{definition}{Definition} 
\newtheorem{assumption}{Assumption}
\theoremstyle{remark}
\newcommand{\set} [1]			{{\mathscr{{#1}}}}
\newcommand{\alg}[1]			{{\mathcal{{#1}}}}
\newcommand{\oper}[1]      {{\mathtt{{#1}}}}
\newcommand{\msr}[1]       {{\it    {{#1}}}}
\newcommand{\cnst}[1]      {{\mathit{{#1}}}}
\newcommand{\cntt}[1]      {{\widetilde{{\mathit{{#1}}}}}}
\newcommand{\integers}[1]	{{\mathbb{Z}}_{^{{#1}}}}
\newcommand{\reals}[1]		{{\Re}_{^{{#1}}}}
\newcommand{\bigo} [1]     {{\cnst{O}({{#1}})}}
\newcommand{\inte}[1]      {{\mathtt{int}{{#1}}}}
\newcommand{\conv}[1]      {{\mathtt{ch}{{#1}}}}
\newcommand{\dif}[1]       {{\mathrm{d}{#1}}}  
\newcommand{\der}[2]        {\tfrac{\dif{#1}}{\dif{#2}}}
\newcommand{\DEF}[0]           {{\triangleq}}  
\newcommand{\oprod}[1]         {{\prod\nolimits_{{#1}}^{\otimes}}} 
\newcommand{\cprod}[1]         {{\prod\nolimits_{{#1}}^{\times}}}
\newcommand{\AC}[0]            {{\prec}}
\newcommand{\abs}[1]           {{\left\lvert{{#1}}\right\lvert}}
\newcommand{\lon}[1]           {{{\left\lVert{{#1}}\right\lVert}}} 
\newcommand{\ind}[0]           {{\imath}}
\newcommand{\jnd}[0]           {{\jmath}}
\newcommand{\knd}[0]           {{\kappa}}
\newcommand{\lnd}[0]           {{\ell}}
\newcommand{\tin}[0]           {{\cnst{t}}}
\newcommand{\blx}[0]           {{\cnst{n}}}
\newcommand{\tinS}[0]          {{\set{T}}}
\newcommand{\EXS}[2]         {{\bf E}_{{#1}}\!\left[{#2}\right]}
\newcommand{\fX}[0]          {{\cnst{f}}}
\newcommand{\cm}[1]         {{{\cnst{g}}_{{#1}}}}
\newcommand{\RD}[3]				{{\cnst{D}}_{{#1}}            \!\left(\left.            \! {#2}\right\Vert {#3}                  \right)}
\newcommand{\CRD}[4]			{{\cnst{D}}_{{#1}}            \!\left(\left.\!\left.    \! {#2}\right\Vert {#3} \right\vert{{#4}}\right)}
\newcommand{\RMI}[3]			{{\cnst{I}}_{{#1}}            \!\left(                  \! {#2};         \!{#3}                \!\right)}
\newcommand{\RMIL}[4]			{{\cnst{I}}_{{#1}}^{{#4}}	  \!\left(                  \! {#2};         \!{#3}                \!\right)} 
\newcommand{\GMIL}[4]			{{\cnst{I}}_{{#1}}^{{\scriptscriptstyle g}{#4}}	  \!\left(                  \! {#2};         \!{#3}                \!\right)} 
\newcommand{\GCL}[3]			{{\cnst{C}}_{{#1},{#2}}^{{\scriptscriptstyle g}{#3}}}
\newcommand{\RCL}[3]			{{\cnst{C}}_{{#1},{#2}}^{{#3}}}
\newcommand{\RRL}[3]			{{\cnst{S}}_{{#1},{#2}}^{{#3}}}
\newcommand{\CRC}[3]			{{\cnst{C}}_{{#1},{#2},{#3}}}
\newcommand{\CRCI}[4]			{{\cntt{C}}_{{#1},{#2},{#3}}^{#4}}
\newcommand{\Aopi}[4]			{{\oper{T}}_{{#1},{#2}}^{#3}({#4})} 
\newcommand{\Aop}[3]			{{\Aopi{#1}{#2}{}{#3}}} 
\newcommand{\cf}[0]				{{\cnst{\rho}}}
\newcommand{\cc}[0]				{{\cnst{\varrho}}}
\newcommand{\uc}[0]				{{\mathds{1}}}
\newcommand{\fcc}[1]			{{\cnst{\Gamma}_{{#1}}}}
\newcommand{\lm}[0]				{{\cnst{\lambda}}}
\newcommand{\cset}[0]			{{\set{A}}}
\newcommand{\spa}[2]			{{\cntt{E}_{sp\!}^{{#1}}}\left({#2}\right)}
\newcommand{\spe}[1]			{{\cnst{E}_{sp\!}}       \left({#1}\right)}
\newcommand{\rate}[0]			{{\cnst{R}}}
\newcommand{\cla}[2]          {{{\xi}_{{#1}}^{{#2}}}}
\newcommand{\rnf}[0]          {{\cnst{\phi}}}
\newcommand{\rno}[0]          {{\cnst{\alpha}}}
\newcommand{\rnt}[0]          {{\cnst{\eta}}}
\newcommand{\Pem}[1]           {{\it P_{{{\bf e}}}^{{#1}}}}         
\newcommand{\enc}[0]           {{\varPsi}} 
\newcommand{\dec}[0]           {{\varTheta}}    
\newcommand{\brl}[0]           {{\alg{B}}}
\newcommand{\rborel}[1]        {{\brl}({#1})}
\newcommand{\oev}[0]           {{\set{E}}}
\newcommand{\fmea}[1]          {{{\alg{M}}^{^{+}}\!({#1})}}
\newcommand{\pmea}[1]          {{{\alg{P}}({#1})}}
\newcommand{\fdis}[1]          {{{\set{M}}^{^{+}}\!({#1})}}
\newcommand{\pdis}[1]          {{{\set{P}}({#1})}}
\newcommand{\dinp}[0]          {{\cnst{x}}}
\newcommand{\inpS}[0]          {{\set{X}}}
\newcommand{\dout}[0]          {{\cnst{y}}}
\newcommand{\outS}[0]          {{\set{Y}}}
\newcommand{\outA}[0]          {{\alg{Y}}}
\newcommand{\dmes}[0]          {{\cnst{m}}}
\newcommand{\mesS}[0]          {{\set{M}}}
\newcommand{\dest}[0]          {{\widehat{{\cnst{m}}}}}
\newcommand{\estS}[0]          {{\widehat{{\set{M}}}}}
\newcommand{\mean}[0]        {{{\msr{\mu}}}}    
\newcommand{\mmn}[1]         {{{\mean}_{{#1}}}}
\newcommand{\mma}[2]         {{{\mean}_{{#1}}^{{#2}}}}    
\newcommand{\qgn}[1]         {{{\mQ}_{{#1}}^{{\scriptscriptstyle g}}}}
\newcommand{\qga}[2]         {{{\mQ}_{{#1}}^{{\scriptscriptstyle g}{#2}}}}
\newcommand{\rfm}[0]         {{{\msr{\nu}}}}
\newcommand{\mP}[0]				{{\msr{p}}}    
\newcommand{\pmn}[1]			{{{\mP}_{{#1}}}}
\newcommand{\pma}[2]			{{{\mP}_{{#1}}^{{#2}}}}
\newcommand{\mQ}[0]				{{\msr{q}}}    
\newcommand{\qmn}[1]			{{{\mQ}_{{#1}}}}
\newcommand{\qma}[2]			{{{\mQ}_{{#1}}^{{#2}}}}
\newcommand{\mU}[0]				{{\msr{u}}}
\newcommand{\mV}[0]				{{\msr{v}}}    
\newcommand{\vma}[2]			{{{\mV}_{{#1}}^{{#2}}}}
\newcommand{\mW}[0]				{{\msr{w}}}    
\newcommand{\wmn}[1]			{{{\mW}_{{#1}}}}
\newcommand{\Wm}[0]				{{{\cnst{W}}}}
\newcommand{\Wmn}[1]			{{{\cnst{W}}_{{#1}}}}
\begin{document}
\pagestyle{plain}
\maketitle 
\begin{abstract}
For any channel with a convex constraint set and finite Augustin capacity,
existence of a unique Augustin center and associated Erven-Harremoes bound 
are established. 
Augustin-Legendre capacity, center, and radius are introduced and proved to be equal 
to the corresponding Renyi-Gallager entities. 
Sphere packing bounds with polynomial prefactors are derived for codes on two families 
of channels:
(possibly non-stationary) memoryless channels with multiple additive cost constraints  
and 
stationary memoryless channels with convex constraints on the empirical distribution 
of the input codewords.
\end{abstract}

\section{Introduction}\label{sec:introduction}
Augustin \cite{augustin69}, \cite{augustin78} derived the sphere packing bound 
for the product channels without assuming the stationarity.
Assuming that order \textonehalf\!~Renyi capacity of the component channels are \(\bigo{\!\ln\!\blx\!}\),
we have derived the sphere packing bound for product channels with a prefactor that is polynomial 
in the block length \(\blx\), \cite[Theorem \ref*{B-thm:productexponent}]{nakiboglu16B}.
In this manuscript, we derive analogous results for two families of memoryless channels. 
As we have done for the product channels in \cite{nakiboglu16B}, we first derive a non-asymptotic 
outer bound for codes on a given memoryless channel, then we derive our asymptotic result using 
this bound.

In \cite[Chapter VII]{augustin78}, Augustin pursued an analysis similar to ours and derived the 
sphere packing bound for memoryless channels with cost constraints \cite[\S36]{augustin78}.
In addition, Augustin established the connection between the exponent of Gallager's inner bound for the cost 
constrained channels \cite[Thm 8]{gallager65} and the sphere packing exponent \cite[\S35]{augustin78}.
Our results surpass Augustin's results in two ways:
\begin{itemize}
\item Augustin assumes the cost function to be 
bounded.\footnote{The issue here is not a matter of rescaling: certain conclusions of 
Augustin's analysis are not correct when cost functions are not bounded.} 
This hypothesis excludes certain important and interesting cases such as the Gaussian 
channels. 
Hence, Augustin's analysis in \cite{augustin78} does not imply the sphere packing bounds
derived by Shannon \cite{shannon59} and Ebert \cite{ebert66}.
We don't assume the cost function to be bounded. Thus, Theorem \ref{thm:CCexponent} establishes
the sphere packing bound for a wider class of channels including the Gaussian channels with 
multiple antennas.\footnote{Shannon's approximation error terms in \cite{shannon59} are considerably 
better than ours. But his derivation relies heavily on the geometry of the output space.
Our derivation, on the other hand, is oblivious towards it.}
It is even possible to handle certain fading scenarios and additional per antenna power 
constraints.

\item The best asymptotic bound implied by Augustin's non-asymptotic bound
\cite[Thm 36.6]{augustin78} is of the form
\(\Pem{av}^{(\blx)}\geq \bigo{\tfrac{1}{e^{\sqrt{\blx}}}}e^{-\spe{\ln\frac{M_{\blx}}{L_{\blx}}-\bigo{\sqrt{\blx}},\Wmn{[1,\blx]},\cc_{\blx}}}\).
In Theorem \ref{thm:CCexponent} we replace 
\(\bigo{\tfrac{1}{e^{\sqrt{\blx}}}}\) by \(\bigo{\tfrac{1}{\blx^{\tau}}}\) by \(\bigo{\sqrt{\blx}}\) to \(0\).
\end{itemize}

For stationary memoryless channels with finite input sets, 
the sphere packing bound is well-known \cite[Ch. 10]{csiszarkorner}, \cite{dalai16}.
For such a channel, one first chooses the most populous constant composition sub-code and then 
derives the sphere packing bound for the code using the sphere packing bound for the constant 
composition sub-code.\footnote{Haroutunian \cite{haroutunian68} was the first one to give a 
complete proof of the sphere packing bound for constant composition codes.
Recently, Altug and Wagner \cite{altugW14A} sharpened the prefactor of the bound for channels with
finite output sets. } 
This technique, however, fails when the input set of the channel is infinite. 
We show that a sphere packing bound similar to Theorem \ref{thm:CCexponent} holds 
for codes on stationary memoryless channels with convex constraints on the empirical distribution 
of the input codewords.


In the rest of this section, we describe our model and notation
and state our main asymptotic result.
In Section \ref{sec:center}, we introduce and analyze Augustin information, mean, 
capacity, and center as purely measure theoretic concepts. 
The role of these concepts in our analysis is analogous to the role of corresponding 
Renyi concepts in \cite{nakiboglu16A}, \cite{nakiboglu16B}.
In Section \ref{sec:costconstrained}, we investigate the cost constrained Augustin 
capacity more closely and introduce the concepts of Augustin-Legendre information 
and Renyi-Gallager information, together with the associated means, capacities, 
centers, and radii.
Our main aim in Section \ref{sec:costconstrained} is to express the cost constrained 
Augustin capacity and center in terms of Augustin-Legendre capacity and center. 
In Section \ref{sec:outerbound}, we derive non-asymptotic outer bounds for 
codes on two families memoryless channels.

\subsection{Model and Notation}\label{sec:model}
For any set \(\inpS\), \(\pdis{\inpS}\) is the set of all probability mass functions 
that are non-zero only on finitely many members of \(\inpS\);
\(\fdis{\inpS}\) is the set of all non-zero mass functions with the same property.
For any measurable space \((\outS,\outA)\), 
\(\pmea{\outA}\) is the set of all probability measures
and \(\fmea{\outA}\) is set of all finite measures.
For any \(\mean,\mQ\in\fmea{\outA}\),
\(\mean\leq \mQ\) iff \(\mean(\oev)\leq \mQ(\oev)\)  \(\forall\oev \in \outA\). 
Similarly, for any \(\mean,\mQ\in\reals{}^{\lnd}\), \(\mean\leq \mQ\) iff \(\mean^{\ind}\leq\mQ^{\ind}\) 
\(\forall\ind \in \{1,\ldots,\lnd\}\). 
For any \(\mean,\mQ\in\reals{}^{\lnd}\),
\(\mean \cdot \mQ\DEF\sum_{\jnd=1}^{\lnd} \mean^{\jnd} \mQ^{\jnd}\).
For any \(\ell\in\integers{+}\), \(\uc\in\reals{}^{\lnd}\) is the vector 
whose all entries are one.
For any \(\set{S}\subset \reals{}^{\lnd}\) 
we denote the interior of \(\set{S}\) by \(\inte{\set{S}}\).
For any set \(\set{S}\) in a vector space
we denote the convex hull of \(\set{S}\) by \(\conv{\set{S}}\).

A \emph{channel} \(\Wm\) is a function from \emph{the input set} \(\inpS\) to  the set of all probability 
measures on \emph{the output space} \((\outS,\outA)\).
A channel \(\Wm:\inpS\to\pmea{\outA}\) is a \emph{product channel} for a finite index set \(\tinS\)
iff there exist channels \(\Wmn{\tin}:\inpS_{\tin}\to\pmea{\outA_{\tin}}\) for all \(\tin\in\tinS\)
satisfying \(\Wm(\dinp)=\oprod{{\tin\in\tinS}}\Wmn{\tin}(\dinp_{\tin})\)
for all \(\dinp\in \inpS\)
where 
\begin{align}
\vspace{-.1cm}
\notag
\inpS
&= \oprod{{\tin\in\tinS}}\inpS_{\tin}
&
\outS
&= \cprod{{\tin\in\tinS}}\outS_{\tin}
&
\outA
&= \oprod{{\tin\in\tinS}}\outA_{\tin}.
\vspace{-.1cm}
\end{align}
A product channel is \emph{stationary} iff all \(\Wmn{\tin}\)'s are identical.
If \(\inpS\subset \oprod{{\tin\in\tinS}}\inpS_{\tin}\) then \(\Wm\) is a 
\emph{memoryless channel}.

An \((M,L)\) \emph{channel code} on \(\Wm:\inpS\to\pmea{\outA}\) 
is an ordered pair \((\enc,\dec)\) composed of  an \emph{encoding function} \(\enc:\mesS\to\inpS\) 
and a \emph{decoding function}\footnote{Recall that for any encoder \(\enc\) a deterministic MAP 
decoder obtains minimum \(\Pem{av}\) among all, possibly non-deterministic, decoders.}
\(\dec:\outS\to\estS\) where \(\mesS\DEF\{1,2,\ldots,M\}\), 
\(\estS\DEF\{\set{L}:\set{L}\subset\mesS \mbox{~and~}\abs{\set{L}}= L\}\),
and \(\dec\) is a  measurable as a function from 
the measurable space \((\outS,\outA)\).

Given an \((M,L)\) channel code \((\enc,\dec)\) on \(\Wm:\inpS\to\pmea{\outA}\) 
\emph{the average error probability} \(\Pem{av}\) and 
\emph{the conditional error probability} \(\Pem{\dmes}\) for \(\dmes \in \mesS\) 
are given by
\begin{align}
\notag
\Pem{av} 
&\DEF\tfrac{1}{M} \sum\nolimits_{\dmes\in \mesS} \Pem{\dmes}
&&&&&
\Pem{\dmes}
&\DEF \Wm(\enc(\dmes))(\{\dmes\notin\dest\}).
\end{align}

A \emph{cost function} \(\cf\) is a function from the input set  to \(\reals{\geq0}^{\lnd}\) for 
some \(\ell\in\integers{+}\).
We assume without loss of generality that\footnote{Augustin \cite[\S33]{augustin78} has the following additional 
hypothesis: \(\vee_{\dinp\in \inpS}\cf(\dinp)\!\leq\!\uc\).} 
\begin{align}
\notag
\inf\nolimits_{\dinp\in \inpS} \cf^{\ind}(\dinp) 
&=0
&
&\forall \ind \in \{1,\ldots,\lnd\}.
\end{align}
Let \(\fcc{\cf}\) be the set of feasible cost constraints for \(\pdis{\inpS}\):
\begin{align}
\notag
\fcc{\cf}
&\DEF\{\cc\in \reals{\geq0}^{\lnd}:\exists\mP\in\pdis{\inpS} \mbox{~s.t.~}\sum\nolimits_{\dinp} \mP(\dinp) \cf(\dinp)\leq \cc\}.
\end{align}
Then \(\fcc{\cf}\) is a convex set with non-empty interior.
 A cost function \(\cf\) for a product channel \(\Wm\) is said to be \emph{additive} iff 
there exists a \(\cf_{\tin}:\inpS_{\tin}\to \reals{\geq0}^{\lnd}\) for each \(\tin\in\tinS\) such that
\begin{align}
\notag
\cf(\dinp)
&=\sum\nolimits_{\tin\in\tinS} \cf_{\tin}(\dinp_{\tin})  
&
&\forall \dinp\in\inpS. 
\end{align}
An encoding function \(\enc\), hence the corresponding code, is said to satisfy 
the cost constraint \(\cc\) iff \(\vee_{\dmes\in\mesS} \cf(\enc(\dmes))\leq \cc\).
A code on a product channel \(\Wm:\oprod{{\tin\in\tinS}}\inpS_{\tin}\to \pmea{\outA}\) 
is said to satisfy an empirical distribution constraint \(\cset\subset\pdis{\inpS_{1}}\) 
iff the empirical distribution, i.e. type or composition, of \(\enc(\dmes)\) is in \(\cset\)
for all \(\dmes\in\mesS\).

\subsection{Main Result}
\begin{assumption}\label{assumption:cost-ologn}
\(\{(\Wmn{\tin},\cf_{\tin},\cc_{\tin})\}_{\tin\in\integers{+}}\) is an ordered sequence of 
channels with associated cost functions and cost constraints satisfying the following 
condition: \(\exists \blx_{0}\in \integers{+}, K\in\reals{+}\) s.t.
\begin{align}
\notag
\max\nolimits_{\tin:\tin\leq \blx} \CRC{\frac{1}{2}}{\Wmn{\tin}}{\cc_{\blx}}
&\leq K \ln(\blx)
&
&\mbox{and}
&
&\cc_{\blx}\!\in\!\inte{\fcc{\cf_{[1,\blx]}}}
\end{align}
for all  \(\forall \blx\geq\blx_{0}\)
where \(\cf_{[1,\blx]}(\dinp_{[1,\blx]})=\sum_{\tin=1}^{\blx} \cf_{\tin}(\dinp_{\tin})\).
\end{assumption}
\begin{theorem}\label{thm:CCexponent}
Let \(\{\!(\!\Wmn{\tin},\!\cf_{\tin},\!\cc_{\tin})\!\}_{\tin\in\integers{+}}\!\) be a sequence 
satisfying  Assumption \ref{assumption:cost-ologn},
\(\rno_{0},\rno_{1}\) be orders satisfying \(0\!<\!\rno_{0}\!<\!\rno_{1}\!<\!1\)
and  \(\varepsilon\!\in\!\reals{\geq0}\).
Then for any sequence of codes \(\{(\enc_{\tin},\dec_{\tin})\}_{\tin\in\integers{+}}\) 
on the product channels \(\{\Wmn{[1,\blx]}\}_{\blx\in\integers{+}}\) satisfying 
\begin{align}
\notag
\vee_{\dmes\in\mesS_{\blx}}  \cf_{[1,\blx]}(\enc_{\tin}(\dmes))
&\leq\cc_{\blx}
&
&\forall \blx\in \integers{+}
\\
\notag
\CRC{\rno_{0}}{\Wmn{[1,\blx]}}{\cc_{\blx}}\!+\!\varepsilon\!\ln^{2}\!\blx
&\leq\!\ln\!\tfrac{M_{\blx}}{L_{\blx}}
\!\leq\!
\CRC{\rno_{1}}{\Wmn{[1,\blx]}}{\cc_{\blx}}
&
&\forall \blx\geq\blx_{0} 
\end{align}
there exists a \(\tau\in \reals{+}\) and an \(\blx_{1}\geq\blx_{0}\) such that 
\begin{align}
\notag
\Pem{av}^{(\blx)}
&\geq \blx^{-\tau}  e^{-\spe{\ln\frac{M_{\blx}}{L_{\blx}},\Wmn{[1,\blx]},\cc_{\blx}}} 
&
&\forall \blx\geq\blx_{1}
\end{align}
where \(\spe{\rate,\Wm,\cc}=\sup\nolimits_{\rno\in (0,1)} \tfrac{1-\rno}{\rno} \left(\CRC{\rno}{\Wm\!}{\cc}-\rate\right)\).
\end{theorem}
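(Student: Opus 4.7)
The plan is to invoke the non-asymptotic outer bound that will be established in Section~\ref{sec:outerbound} and then extract the asymptotic statement by a standard optimisation-in-\(\rno\) argument. For the length-\(\blx\) code \((\enc_{\blx},\dec_{\blx})\) satisfying the cost constraint \(\cc_{\blx}\), this outer bound will have the generic form
\[
\Pem{av}^{(\blx)} \geq A_{\blx}(\rno)\exp\!\left(-\tfrac{1-\rno}{\rno}\bigl[\CRC{\rno}{\Wmn{[1,\blx]}}{\cc_{\blx}}-\ln\tfrac{M_{\blx}}{L_{\blx}}+\Delta_{\blx}(\rno)\bigr]\right)
\]
valid for every \(\rno\in(0,1)\), where \(\Delta_{\blx}(\rno)\) is a rate correction coming from an Erven-Harremoes change-of-measure to the Augustin centre of \(\Wmn{[1,\blx]}\) under \(\cc_{\blx}\), and \(A_{\blx}(\rno)\) is the residual multiplicative prefactor. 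By the tensorisation of Augustin information and centre over product channels developed in Section~\ref{sec:center}, combined with Assumption~\ref{assumption:cost-ologn}, the variance-like quantities governing both \(\Delta_{\blx}(\rno)\) and \(A_{\blx}(\rno)\) are controlled by \(\sum_{\tin=1}^{\blx}\CRC{1/2}{\Wmn{\tin}}{\cc_{\blx}}=\bigo{\blx\ln\blx}\), uniformly for \(\rno\) in any compact subinterval of \((0,1)\).

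For each \(\blx\) I would pick \(\rno_{\blx}\in(\rno_{0},\rno_{1})\) that approximately realises the supremum defining \(\spe{\ln\frac{M_{\blx}}{L_{\blx}},\Wmn{[1,\blx]},\cc_{\blx}}\). The sandwich \(\CRC{\rno_{0}}{\Wmn{[1,\blx]}}{\cc_{\blx}}+\varepsilon\ln^{2}\!\blx\leq\ln\frac{M_{\blx}}{L_{\blx}}\leq\CRC{\rno_{1}}{\Wmn{[1,\blx]}}{\cc_{\blx}}\) assumed in the hypothesis, together with the monotonicity and concavity of \(\rno\mapsto\CRC{\rno}{\Wmn{[1,\blx]}}{\cc_{\blx}}\) inherited from Section~\ref{sec:costconstrained}, forces \(\rno_{\blx}\) into a compact subinterval of \((\rno_{0},\rno_{1})\) independent of \(\blx\). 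Consequently the slope \(\tfrac{1-\rno_{\blx}}{\rno_{\blx}}\) is uniformly bounded, and any \(\bigo{\ln\blx}\) shift of the rate argument of \(\spe{\cdot,\cdot,\cdot}\) costs only a polynomial factor in the prefactor; evaluating the non-asymptotic bound at \(\rno=\rno_{\blx}\) therefore yields the advertised exponent \(\spe{\ln\frac{M_{\blx}}{L_{\blx}},\Wmn{[1,\blx]},\cc_{\blx}}\) up to such a polynomial correction.

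The remaining issue is to upgrade the naive \(\bigo{\sqrt{\blx\ln\blx}}\) scale of \(\Delta_{\blx}(\rno_{\blx})\)---which would only give an \(e^{-\bigo{\sqrt{\blx\ln\blx}}}\) prefactor, as in Augustin's bound recalled in the introduction---into the target \(\blx^{-\tau}\) prefactor. I would do this by enlarging the change-of-measure threshold by \(\Theta(\ln\blx)\), converting the Erven-Harremoes concentration from a Berry-Esseen to a moderate-deviation estimate, making \(\Delta_{\blx}(\rno_{\blx})=\bigo{\ln\blx}\) at the cost of only an \(\bigo{\ln\blx}\) increase in \(\ln(1/A_{\blx}(\rno_{\blx}))\); the \(\varepsilon\ln^{2}\!\blx\) slackness built into the hypothesis comfortably accommodates this logarithmic rate shift. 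The main obstacle will be the simultaneous bookkeeping of several competing errors---the Erven-Harremoes tail, the Taylor expansion of \(\spe{\cdot,\cdot,\cdot}\) around \(\rno_{\blx}\), and the finite-\(\blx\) gap between the Augustin-Legendre and Renyi-Gallager capacities of Section~\ref{sec:costconstrained}---each of which must be tamed by Assumption~\ref{assumption:cost-ologn} so that the total loss is only a polynomial in \(\blx\); extending Augustin's argument from bounded to unbounded costs, which is essential for covering Gaussian-type channels, is the other delicate point, and is precisely where the Augustin-Legendre apparatus of Section~\ref{sec:costconstrained} replaces the uniform boundedness that \cite[\S36]{augustin78} had relied upon.
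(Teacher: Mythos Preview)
Your high-level framing---non-asymptotic outer bound plus asymptotic parameter choice, with tensorisation and Assumption~\ref{assumption:cost-ologn} controlling the error terms---matches the paper's. The paper indeed states that Theorem~\ref{thm:CCexponent} follows from Lemma~\ref{lem:CCaugustin} (the non-asymptotic bound) and Lemma~\ref{lem:avspherepacking}. But the mechanism you propose for the polynomial prefactor is not the one the paper uses, and a couple of your announced steps do not correspond to anything in the argument.

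The paper does \emph{not} select a single near-optimal order \(\rno_{\blx}\) and then Taylor-expand \(\spe{\cdot}\) around it, nor does it switch from Berry--Esseen to a moderate-deviation estimate. Instead, the key device is an \emph{order-averaged Augustin center}: one replaces \(\qma{\rno,\Wmn{\tin}}{\lm_{\rno,\Wm,\cc}}\) by its average \(\qma{\rno,\Wmn{\tin}}{\epsilon_{1}}\) over \(\rnf\in(\rno-\epsilon_{1}\rno,\rno+\epsilon_{1}(1-\rno))\) (equation~\eqref{eq:def:avncenter}). This averaging yields directly a bound whose exponent is the \emph{averaged} sphere packing exponent \(\spa{\epsilon_{1}}{\cdot}\) rather than \(\spe{\cdot}\); Lemma~\ref{lem:avspherepacking} then quantifies \(\spa{\epsilon_{1}}{\cdot}-\spe{\cdot}\leq\tfrac{\epsilon_{1}}{\rnf-\epsilon_{1}}\tfrac{\rate}{\rnf}\), so choosing \(\epsilon_{1}\) of order \(1/\blx\) makes the exponent gap \(\bigo{\ln\blx}\) while the prefactor picks up only polynomial losses from \(\epsilon_{1}^{-1}\) and from \(\gamma\) (which, via \cite[Lem~\ref*{B-lem:MomentBound}]{nakiboglu16B} and Assumption~\ref{assumption:cost-ologn}, is \(\bigo{\blx^{1/\knd}\ln\blx}\)). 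Inside the proof of Lemma~\ref{lem:CCaugustin} the order is chosen \emph{per message} via the intermediate value theorem and then reconciled through a pigeon-hole step---not globally through an optimiser \(\rno_{\blx}\). Finally, there is no ``finite-\(\blx\) gap between the Augustin--Legendre and Renyi--Gallager capacities'' to manage: Section~\ref{sec:renyigallager} shows \(\GCL{\rno}{\Wm}{\lm}=\RCL{\rno}{\Wm}{\lm}\) exactly.

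Your moderate-deviation idea is not obviously unworkable, but it is a different route and you have not said which moment-generating-function control would replace the \(\knd\)th-moment bound the paper exploits; the order-averaging trick is what actually delivers the polynomial prefactor here.
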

Theorem \ref{thm:CCexponent} follows from Lemma \ref{lem:avspherepacking} and Lemma \ref{lem:CCaugustin},
through an analysis similar to the one in \cite[\S \ref*{B-sec:proof:thm:productexponent}]{nakiboglu16B}.
An asymptotic result similar to Theorem \ref{thm:CCexponent} for codes on stationary memoryless channels with 
convex empirical distribution constraints can be proved using Lemma \ref{lem:avspherepacking} and 
the bound given in equation (\ref{eq:lem:Saugustin}).
\section{The Augustin Information and Capacity}\label{sec:center}
\vspace{-1mm}
\(\forall\rno\in\reals{+},\mW,\mQ\in\fmea{\outA}\), \emph{the order \(\rno\) 
Renyi divergence} is 
\begin{align}
\notag
\RD{\rno}{\mW}{\mQ}
&\DEF \begin{cases}
\tfrac{1}{\rno-1}\ln \int (\der{\mW}{\rfm})^{\rno} (\der{\mQ}{\rfm})^{1-\rno} \rfm(\dif{\dout})  
&\rno\neq 1\\
\int  \der{\mW}{\rfm}\left[ \ln\der{\mW}{\rfm} -\ln \der{\mQ}{\rfm}\right] \rfm(\dif{\dout})  
&\rno=1
\end{cases}
\end{align}
where \(\rfm\) is any measure s.t. \(\mW\AC\rfm\),\(\mQ\AC\rfm\).
If \(\RD{\!\rno}{\!\mW}{\!\mQ}\!<\!\infty\) then \emph{the order \(\rno\) tilted probability measure} 
\(\vma{\rno}{\mW,\mQ}\) is 
\begin{align}
\notag
\der{\vma{\rno}{\mW,\mQ}}{\rfm}
&\DEF e^{(1-\rno)\RD{\rno}{\mW}{\mQ}}
(\der{\mW}{\rfm})^{\rno} (\der{\mQ}{\rfm})^{1-\rno}.
\end{align}
\vspace{-5mm}
\subsection{The Augustin Information and Mean}\label{sec:information}
\vspace{-1mm}
\begin{definition}\label{def:information}
For any \(\rno\in\reals{+}\), \(\!\Wm:\inpS\to \pmea{\outA}\), and \(\mP\in \pdis{\inpS}\)
\emph{the order \(\rno\) Augustin information for the prior \(\mP\)} is
\begin{align}
\notag
\RMI{\rno}{\mP}{\Wm}
&\DEF \inf\nolimits_{\mQ\in \pmea{\outA}} \CRD{\rno}{\Wm}{\mQ}{\mP}
\end{align}
where \(\CRD{\rno}{\Wm}{\mQ}{\mP}\DEF \sum\nolimits_{\dinp\in \inpS}  \mP(\dinp) \RD{\rno}{\Wm(\dinp)}{\mQ}\).
\end{definition}

Whenever it exists, the uniqueness of \(\qmn{\rno,\mP}\in\pmea{\outA}\) satisfying 
\(\RMI{\rno}{\mP}{\Wm}=\CRD{\rno}{\Wm}{\qmn{\rno,\mP}}{\mP}\) 
follows from the strict convexity of \(\RD{\rno}{\mW}{\mQ}\) in \(\mQ\),
i.e. \cite[Thm 12]{ervenH14}.
Such a \(\qmn{\rno,\mP}\) is called \emph{the order \(\rno\) Augustin mean for the prior \(\mP\)}.
If \(\abs{\outS}<\infty\) then \(\pmea{\outA}\) is  compact and the existence of 
\(\qmn{\rno,\mP}\) follows from the lower semicontinuity of 
\(\RD{\rno}{\mW}{\mQ}\) in \(\mQ\), i.e \cite[Lem \ref*{A-lem:divergencelsc}]{nakiboglu16A}, and 
the extreme value theorem \cite[Ch3\S12.2]{kolmogorovfomin75}.

Lemma \ref{lem:information} asserts the existence of a unique \(\qmn{\rno,\mP}\) for arbitrary channels 
and describes \(\qmn{\rno,\mP}\) via the identities it has to 
satisfy. 
Part (\ref{information:orderone}) is well known;
part (\ref{information:orderzeroone}) is due to\footnote{\cite[34.2]{augustin78} 
claims eq. (\ref{eq:lem:information:iteration}) for \(\qgn{1,\mP}\) instead of \(\qgn{\rno,\mP}\). 
We could not confirm the correctness of Augustin's proof of \cite[34.2]{augustin78}, see \cite{nakiboglu16C}.} 
Augustin \cite[34.2]{augustin78}.
A generalization of Lemma \ref{lem:information} for all \(\rno\!\in\!\reals{+}\) is proved in \cite{nakiboglu16C}.
\begin{definition}
For any \(\rno\in\reals{+}\), \(\Wm\!:\!\inpS\!\to\!\pmea{\outA}\), and \(\mP\!\in\!\pdis{\inpS}\),
\begin{itemize}
\item \(\Aop{\rno}{\mP}{\cdot}:\{\mQ\in \fmea{\outA}:\CRD{\rno}{\Wm}{\mQ}{\mP}<\infty\}\to\pmea{\outA}\) is 
\begin{align}
\notag
\Aop{\rno}{\mP}{\mQ}
&\DEF\sum\nolimits_{\dinp}\mP(\dinp) \vma{\rno}{\Wm(\dinp),\mQ}.
\end{align}
Furthermore, 
\(\Aopi{\rno}{\mP}{\ind+1}{\mQ}\DEF\Aop{\rno}{\mP}{\Aopi{\rno}{\mP}{\ind}{\mQ}}\) 
for \(\ind\in\integers{+}\).

\item \(\mmn{\rno,\mP}\in \fmea{\outA}\) and \(\qgn{\rno,\mP}\in\pmea{\outA}\) are given by
\begin{align}
\notag
\der{\mmn{\rno,\mP}}{\rfm}
&\DEF \left[\sum\nolimits_{\dinp} \mP(\dinp) \left(\der{\Wm{(\dinp)}}{\rfm}\right)^{\rno}  \right]^{\frac{1}{\rno}}
&
&
&
\qgn{\rno,\mP}
&\DEF\tfrac{\mmn{\rno,\mP}}{\lon{\mmn{\rno,\mP}}}
\end{align}
where \(\rfm\) is any measure for which \((\sum_{\dinp}\mP(\dinp)\Wm(\dinp))\AC\rfm\).
\end{itemize}
\end{definition}

\begin{lemma}\label{lem:information}
For any \(\Wm:\inpS\to \pmea{\outA}\) and \(\mP\in\pdis{\inpS}\),
\begin{enumerate}[(a)]
\item\label{information:orderone}
\(\RMI{1}{\mP}{\Wm}=\CRD{1}{\Wm}{\qmn{1,\mP}}{\mP}\) for \(\qmn{1,\mP}\DEF\sum\nolimits_{\dinp} \mP(\dinp) \Wm(\dinp)\).
\begin{align}
\label{eq:lem:informationEHB-one}
\CRD{1}{\!\Wm}{\mQ}{\mP}-\RMI{1}{\mP}{\!\Wm}
&\!=\!\RD{1}{\qmn{1,\mP}}{\mQ}
&
&\forall\!\mQ\!\in\pmea{\outA}.
\end{align}
\item\label{information:orderzeroone}
\(\forall\!\rno\!\!\in\!\!(0,\!1)\exists!\qmn{\rno,\mP}\) s.t. \(\RMI{\rno}{\mP}{\!\Wm}\!=\!\CRD{\rno}{\!\Wm\!}{\!\qmn{\rno,\mP}\!}{\mP}\).
\(\qmn{\rno,\mP}\!\sim\!\qmn{1,\mP}\),
\begin{align}
\label{eq:lem:informationEHB}
\CRD{\rno}{\Wm}{\mQ}{\mP}\!-\!\RMI{\rno}{\mP}{\Wm}
&\!\geq\!\RD{\rno}{\qmn{\rno,\mP}}{\mQ}
&
\forall\!\mQ\!\!\in\!\pmea{\outA}&
\\
\label{eq:lem:information:fixedpoint}
\Aop{\rno}{\mP}{\qmn{\rno,\mP}}
&\!=\qmn{\rno,\mP}
\\
\label{eq:lem:information:iteration}
\lim\limits_{\jnd\to\infty}\!
\lon{\qmn{\rno,\mP}\!-\!\Aopi{\rno}{\mP}{\jnd}{\qgn{\rno,\mP}}}
\!&\!=\!0.
\end{align}
Furthermore, if a \(\mQ\in\pmea{\outA}\) satisfying \(\qmn{1,\mP}\AC \mQ\) 
is a fixed point of \(\Aop{\rno}{\mP}{\cdot}\) then \(\mQ=\qmn{\rno,\mP}\).
\item\label{information:product}
If \(\rno\in(0,1]\), \(\Wm\) is a product channel for a finite index set \(\tinS\),
and \(\mP\) is of the form  \(\oprod{\tin\in\tinS} \pmn{\tin}\) 
for \(\pmn{\tin}\in \pdis{\inpS_{\tin}}\) then 
\begin{align}
\label{eq:information:product}
\qmn{\rno,\mP}
&\!=\!\oprod{\tin\in\tinS}\!\qmn{\rno,\pmn{\tin}}
&
&
&
\RMI{\rno}{\mP}{\Wm}
&\!=\!\sum\nolimits_{\tin\in \tinS}\!\RMI{\rno}{\pmn{\tin}}{\Wmn{\tin}}.
\end{align} 
\end{enumerate}
\end{lemma}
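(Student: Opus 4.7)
Part (\ref{information:orderone}) is the golden-formula identity for mutual information. Taking \(\qmn{1,\mP}=\sum_{\dinp}\mP(\dinp)\Wm(\dinp)\) as reference, a direct Radon-Nikodym computation from the definition of \(\CRD{1}{\Wm}{\mQ}{\mP}\) establishes
\begin{align*}
\CRD{1}{\Wm}{\mQ}{\mP}=\CRD{1}{\Wm}{\qmn{1,\mP}}{\mP}+\RD{1}{\qmn{1,\mP}}{\mQ}
\end{align*}
for every \(\mQ\in\pmea{\outA}\); this expresses \(\CRD{1}{\Wm}{\mQ}{\mP}\) as a constant plus a nonnegative quantity that vanishes uniquely at \(\mQ=\qmn{1,\mP}\), yielding both the minimizer and (\ref{eq:lem:informationEHB-one}) at once.

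For part (\ref{information:orderzeroone}), I would construct \(\qmn{\rno,\mP}\) as the limit of the iterates \(\mQ_{\jnd}\DEF\Aopi{\rno}{\mP}{\jnd}{\qgn{\rno,\mP}}\). Working in the cone of measures absolutely continuous with respect to \(\qmn{1,\mP}\) -- every \(\Wm(\dinp)\) with \(\mP(\dinp)>0\), as well as \(\qgn{\rno,\mP}\) and each \(\mQ_{\jnd}\), lies in this cone -- the Radon-Nikodym derivative of each iterate is explicitly computable from the density of \(\vma{\rno}{\Wm(\dinp),\mQ}\). A Jensen-type manipulation on these densities produces a monotonicity/coercivity estimate of the form
\begin{align*}
\CRD{\rno}{\Wm}{\mQ_{\jnd}}{\mP}-\CRD{\rno}{\Wm}{\mQ_{\jnd+1}}{\mP}\geq c(\rno)\lon{\mQ_{\jnd+1}-\mQ_{\jnd}}^{2},
\end{align*}
which, combined with the nonnegativity and finiteness of \(\CRD{\rno}{\Wm}{\qgn{\rno,\mP}}{\mP}\), forces the sequence \(\{\mQ_{\jnd}\}\) to be Cauchy in total variation. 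Completeness of \(\pmea{\outA}\) in total variation yields a limit \(\qmn{\rno,\mP}\), and continuity of \(\Aop{\rno}{\mP}{\cdot}\) on the dominated cone promotes it to a fixed point, giving (\ref{eq:lem:information:fixedpoint}) and (\ref{eq:lem:information:iteration}).

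To prove the Erven-Harremoes bound (\ref{eq:lem:informationEHB}), I would substitute the fixed-point identity \(\qmn{\rno,\mP}=\sum_{\dinp}\mP(\dinp)\vma{\rno}{\Wm(\dinp),\qmn{\rno,\mP}}\) into \(\CRD{\rno}{\Wm}{\mQ}{\mP}-\RMI{\rno}{\mP}{\Wm}\), rewrite each \(\RD{\rno}{\Wm(\dinp)}{\mQ}-\RD{\rno}{\Wm(\dinp)}{\qmn{\rno,\mP}}\) via the explicit density \(\der{\vma{\rno}{\Wm(\dinp),\qmn{\rno,\mP}}}{\qmn{\rno,\mP}}\), and apply Jensen's inequality to the concave function \(x\mapsto x^{\rno}\) to collapse the \(\mP\)-weighted integral to \(\RD{\rno}{\qmn{\rno,\mP}}{\mQ}\). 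Uniqueness of \(\qmn{\rno,\mP}\) as minimizer -- and among fixed points absolutely continuous with respect to \(\qmn{1,\mP}\) -- is then immediate from (\ref{eq:lem:informationEHB}). The most delicate step is the equivalence \(\qmn{\rno,\mP}\sim\qmn{1,\mP}\): \(\qmn{\rno,\mP}\AC\qmn{1,\mP}\) is inherited along the iteration, while the converse is extracted from the fixed-point equation by a pointwise positivity argument showing that \(\der{\qmn{\rno,\mP}}{\qmn{1,\mP}}>0\) on the support of \(\qmn{1,\mP}\).

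For part (\ref{information:product}), the plan is tensorization. The product identity \(\vma{\rno}{\oprod{\tin}\Wmn{\tin}(\dinp_{\tin}),\oprod{\tin}\qmn{\rno,\pmn{\tin}}}=\oprod{\tin}\vma{\rno}{\Wmn{\tin}(\dinp_{\tin}),\qmn{\rno,\pmn{\tin}}}\) verifies that \(\oprod{\tin\in\tinS}\qmn{\rno,\pmn{\tin}}\) is a fixed point of \(\Aop{\rno}{\mP}{\cdot}\) on the product channel, and since this candidate is absolutely continuous with respect to \(\oprod{\tin}\qmn{1,\pmn{\tin}}=\qmn{1,\mP}\), the uniqueness clause of (\ref{information:orderzeroone}) (with (\ref{information:orderone}) covering \(\rno=1\)) identifies it as \(\qmn{\rno,\mP}\); the information part of (\ref{eq:information:product}) then follows by additivity of \(\RD{\rno}{\cdot}{\cdot}\) over product measures. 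The main anticipated obstacle is the equivalence \(\qmn{\rno,\mP}\sim\qmn{1,\mP}\) together with the verification that the iteration is genuinely coercive on arbitrary measurable output spaces; the remaining pieces are routine convex-analytic manipulations adapted to the chosen dominating measure.
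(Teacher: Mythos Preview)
The paper does not actually prove Lemma~\ref{lem:information}: it declares part~(\ref{information:orderone}) ``well known,'' attributes part~(\ref{information:orderzeroone}) to Augustin~\cite[34.2]{augustin78}, and defers the full argument to the companion manuscript~\cite{nakiboglu16C}. So there is no in-paper proof to compare against; your iteration-based plan is in the same spirit as Augustin's original approach, which the paper explicitly invokes.

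That said, your sketch has a genuine gap at the Cauchy step. From the monotonicity/coercivity estimate
\[
\CRD{\rno}{\Wm}{\mQ_{\jnd}}{\mP}-\CRD{\rno}{\Wm}{\mQ_{\jnd+1}}{\mP}\ \geq\ c(\rno)\,\lon{\mQ_{\jnd+1}-\mQ_{\jnd}}^{2}
\]
and the finiteness of \(\CRD{\rno}{\Wm}{\qgn{\rno,\mP}}{\mP}\) you obtain only \(\sum_{\jnd}\lon{\mQ_{\jnd+1}-\mQ_{\jnd}}^{2}<\infty\). This does \emph{not} force \(\{\mQ_{\jnd}\}\) to be Cauchy in total variation: square-summable consecutive increments are compatible with a sequence that oscillates indefinitely (already in \(\reals{}\) one can build bounded sequences with \(\sum(x_{\jnd+1}-x_{\jnd})^{2}<\infty\) that fail to converge). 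The actual proofs close this gap by a different mechanism---either by first establishing existence of \(\qmn{\rno,\mP}\) via a compactness/lower-semicontinuity argument and then using (\ref{eq:lem:informationEHB}) itself to control \(\lon{\qmn{\rno,\mP}-\mQ_{\jnd}}\), or by proving a stronger one-step estimate that controls the distance to the eventual limit rather than to the next iterate. Your plan needs one of these ingredients.

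Two smaller points also require more than you indicate. First, ``continuity of \(\Aop{\rno}{\mP}{\cdot}\) on the dominated cone'' in total variation is not automatic: the map \(\mQ\mapsto\vma{\rno}{\Wm(\dinp),\mQ}\) involves \((\der{\mQ}{\rfm})^{1-\rno}\) and a data-dependent normalization, and \(L^{1}\)-convergence of \(\der{\mQ_{\jnd}}{\rfm}\) does not directly yield \(L^{1}\)-convergence of its \((1-\rno)\)-th power. Second, your derivation of the ``uniqueness among fixed points'' clause from (\ref{eq:lem:informationEHB}) is incomplete: (\ref{eq:lem:informationEHB}) shows minimizers are unique, but to conclude that any fixed point \(\mQ\) with \(\qmn{1,\mP}\AC\mQ\) equals \(\qmn{\rno,\mP}\) you still need the separate fact that such a fixed point is necessarily a minimizer (equivalently, that \(\Aop{\rno}{\mP}{\cdot}\) strictly decreases \(\CRD{\rno}{\Wm}{\cdot}{\mP}\) off the optimum). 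Parts~(\ref{information:orderone}) and~(\ref{information:product}) of your plan are fine.
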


\subsection{The Constrained Augustin Capacity and Center}\label{sec:capacity}
\begin{definition}\label{def:capacity}
For any \(\rno\!\in\!\reals{+}\), \(\Wm:\inpS\!\to\!\pmea{\outA}\), and 
\(\cset\!\subset\!\pdis{\inpS}\),
\emph{the order \(\rno\) Augustin capacity of \(\Wm\) for constraint set \(\cset\)} is 
\begin{align}
\notag
\CRC{\rno}{\Wm\!}{\cset}
&\DEF \sup\nolimits_{\mP \in \cset}  \RMI{\rno}{\mP}{\Wm}.
\end{align}
\end{definition}
Using the definition of \(\RMI{\rno}{\mP}{\Wm}\) we get
\begin{align}
\notag
\CRC{\rno}{\Wm\!}{\cset}
&=\sup\nolimits_{\mP \in \cset}\inf\nolimits_{\mQ\in\pmea{\outA}} \CRD{\rno}{\Wm}{\mQ}{\mP}.
\end{align}

Proofs of the propositions presented in this subsection can be found in \cite{nakiboglu16C}.
They are very similar to the proofs of the corresponding claims in 
\cite[\S\ref*{A-sec:capacity}, \S\ref*{A-sec:center}, \S\ref*{A-sec:constrainedcapacity}]{nakiboglu16A}
for Renyi capacity;
we invoke Lemma \ref{lem:information} instead of \cite[Lem \ref*{A-lem:information:def}]{nakiboglu16A}.

\begin{lemma}\label{lem:capacityO}.
For any \(\Wm:\inpS\to \pmea{\outA}\) and \(\cset\subset\pdis{\inpS}\)
\begin{enumerate}[(a)]
\item\label{capacityO-increasing}
\(\CRC{\rno}{\Wm\!}{\cset}:(0,1]\!\to\![0,\infty]\) 
is increasing and continuous
\item\label{capacityO-decreasing}
\(\tfrac{1-\rno}{\rno}\!\CRC{\rno}{\Wm\!}{\cset}:(0,1)\!\to\![0,\infty]\) 
is decreasing and continuous
\item\label{capacityO-finiteness}
\(\exists\rno\!\in\!(0,1)\) ~s.t.~\(\CRC{\rno}{\Wm\!}{\cset}\!<\!\infty\) iff
\(\CRC{\rnf}{\Wm}{\cset}\!<\!\infty~\forall\rnf\!\in\!(0,1)\).
\end{enumerate}
\end{lemma}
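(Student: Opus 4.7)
The plan is to follow the template developed for Renyi capacity in \cite[\S\ref*{A-sec:capacity}, \S\ref*{A-sec:constrainedcapacity}]{nakiboglu16A}, with the Augustin information machinery of Lemma \ref{lem:information} playing the role of the Renyi information lemmas. Two pointwise facts about the Renyi divergence do the heavy lifting: \(\rno\mapsto\RD{\rno}{\mW}{\mQ}\) is nondecreasing, and \((1-\rno)\RD{\rno}{\mW}{\mQ}\) is concave in \(\rno\) on \((0,1)\) (by the log-convexity of the H\"older integrand), from which \(\tfrac{1-\rno}{\rno}\RD{\rno}{\mW}{\mQ}\) is nonincreasing. All three properties are linear in \(\mP\), so they descend to \(\CRD{\rno}{\Wm}{\mQ}{\mP}\). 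Since the infimum of a family of nondecreasing (resp.\ nonincreasing, resp.\ concave) functions is itself nondecreasing (resp.\ nonincreasing, resp.\ concave), they survive the passage to \(\RMI{\rno}{\mP}{\Wm}=\inf_{\mQ}\CRD{\rno}{\Wm}{\mQ}{\mP}\); the supremum over \(\mP\in\cset\) then preserves the two monotonicity statements, producing the monotone halves of (a) and (b) at one stroke.

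For continuity on the open interval \((0,1)\), concavity of \(\rno\mapsto(1-\rno)\RMI{\rno}{\mP}{\Wm}\) gives continuity of every map \(\rno\mapsto\RMI{\rno}{\mP}{\Wm}\) at interior orders, so both \(f(\rno)\DEF\CRC{\rno}{\Wm\!}{\cset}\) and \(g(\rno)\DEF\tfrac{1-\rno}{\rno}f(\rno)\) are lower semicontinuous as suprema of continuous functions. The key step then is a one-sided squeeze that pairs monotonicity with lower semicontinuity: for nondecreasing lower-semicontinuous \(f\) one has \(f(\rno^{*-})\leq f(\rno^{*})\leq \liminf_{\rno\to\rno^{*}}f(\rno)\leq f(\rno^{*-})\), so \(f\) is left-continuous on \((0,1)\); symmetrically, nonincreasing lower-semicontinuous \(g\) is right-continuous. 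The remaining one-sided continuities are exported through the identity \(f(\rno)=\tfrac{\rno}{1-\rno}g(\rno)\), whose prefactor is continuous and positive on \((0,1)\): right-continuity of \(g\) promotes to right-continuity of \(f\), and vice versa for the left-continuities. For continuity of \(f\) at \(\rno=1\) in (a), I would use Lemma \ref{lem:information}(\ref{information:orderone}) to identify \(\RMI{1}{\mP}{\Wm}=\CRD{1}{\Wm}{\qmn{1,\mP}}{\mP}\), invoke monotone convergence of \(\CRD{\rno}{\Wm}{\qmn{1,\mP}}{\mP}\) to \(\CRD{1}{\Wm}{\qmn{1,\mP}}{\mP}\) as \(\rno\uparrow 1\) (from the pointwise monotone convergence \(\RD{\rno}{\mW}{\mQ}\uparrow\RD{1}{\mW}{\mQ}\)), sandwich \(\RMI{\rno}{\mP}{\Wm}\) between the monotone lower bound from (a) and the upper bound \(\CRD{\rno}{\Wm}{\qmn{1,\mP}}{\mP}\), and finally exchange the limit with the supremum over \(\mP\) using monotonicity.

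Part (c) is a short corollary of (a) and (b). If \(\CRC{\rnf}{\Wm\!}{\cset}<\infty\) for some \(\rnf\in(0,1)\), part (a) gives \(\CRC{\rno}{\Wm\!}{\cset}\leq\CRC{\rnf}{\Wm\!}{\cset}<\infty\) for every \(\rno\in(0,\rnf]\), while part (b) combined with the finiteness of \(\tfrac{\rno}{1-\rno}\) on \((0,1)\) gives \(\CRC{\rno}{\Wm\!}{\cset}\leq\tfrac{\rno}{1-\rno}\cdot\tfrac{1-\rnf}{\rnf}\CRC{\rnf}{\Wm\!}{\cset}<\infty\) for every \(\rno\in[\rnf,1)\); the converse is vacuous. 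The main obstacle, as flagged above, is the continuity step: propagating semicontinuity and monotonicity through an uncountable supremum over \(\cset\) without any uniform modulus requires the paired-monotonicity squeeze together with the prefactor identity linking \(f\) and \(g\), and this is the one piece of the argument that is imported wholesale from the Renyi-capacity template of \cite{nakiboglu16A}.
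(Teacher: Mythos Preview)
Your plan is exactly what the paper prescribes: transplant the Renyi-capacity argument of \cite{nakiboglu16A} with Lemma~\ref{lem:information} in place of the Renyi-information lemma, and the paper gives no more detail than that. Your monotonicity derivations, the concavity of \((1-\rno)\RMI{\rno}{\mP}{\Wm}\) (as an infimum of concave functions) and the resulting continuity of \(\RMI{\rno}{\mP}{\Wm}\) on \((0,1)\), the paired LSC/monotonicity squeeze that upgrades lower semicontinuity of \(f\) and \(g\) to two-sided continuity via the identity \(f=\tfrac{\rno}{1-\rno}g\), and the deduction of (\ref{capacityO-finiteness}) from (\ref{capacityO-increasing})--(\ref{capacityO-decreasing}) are all sound.

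The one genuine gap is left-continuity at \(\rno=1\). Your ``sandwich'' supplies only the upper half: \(\RMI{\rno}{\mP}{\Wm}\leq\CRD{\rno}{\Wm}{\qmn{1,\mP}}{\mP}\uparrow\RMI{1}{\mP}{\Wm}\) merely confirms \(\lim_{\rno\uparrow1}\RMI{\rno}{\mP}{\Wm}\leq\RMI{1}{\mP}{\Wm}\), which monotonicity already gives; to exchange \(\lim_{\rno\uparrow1}\) with \(\sup_{\mP\in\cset}\) you still need \(\lim_{\rno\uparrow1}\RMI{\rno}{\mP}{\Wm}\geq\RMI{1}{\mP}{\Wm}\), and nothing in your outline produces it. The infimum over \(\mQ\) does not preserve lower semicontinuity, so the LSC route you used on the open interval is unavailable at the endpoint, and here the Renyi template does not transfer verbatim: in \cite{nakiboglu16A} the information has the explicit mean \(\qgn{\rno,\mP}\) whose behaviour as \(\rno\to1\) is read off directly, whereas \(\qmn{\rno,\mP}\) is only characterised implicitly by Lemma~\ref{lem:information}. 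One clean repair is the Jensen inequality recorded in \S\ref{sec:renyigallager}: \(\RMI{\rno}{\mP}{\Wm}\geq\GMI{\rno}{\mP}{\Wm}\) for \(\rno\in(0,1)\), and the Renyi--Gallager information is continuous at \(\rno=1\) with limit \(\RMI{1}{\mP}{\Wm}\); this furnishes the missing lower half of the sandwich.
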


\begin{theorem}\label{thm:minimax}
\(\forall\rno\!\in\!(0,1]\),\(\Wm\!:\!\inpS\!\to\!\pmea{\outA}\), and
convex \(\cset\!\subset\!\pdis{\inpS}\), 
\begin{align}
\notag
\sup\limits_{\mP \in \cset} \inf\limits_{\mQ \in \pmea{\outA}} 
 \CRD{\rno}{\Wm}{\mQ}{\mP}
&=
\inf\limits_{\mQ \in \pmea{\outA}} \sup\limits_{\mP \in \cset} 
 \CRD{\rno}{\Wm}{\mQ}{\mP}.
\end{align}
If \(\CRC{\rno}{\Wm\!}{\cset}<\infty\) then 
\(\exists!\qmn{\rno,\Wm,\cset}\in\pmea{\outA}\),
called the order \(\rno\) Augustin center of \(\Wm\) for the constraint set \(\cset\), 
such that
\begin{align}
\notag
\CRC{\rno}{\Wm\!}{\cset}
&=\sup\nolimits_{\mP \in \cset} \CRD{\rno}{\Wm}{\qmn{\rno,\Wm,\cset}}{\mP}.
\end{align}
If \(\lim_{\ind\!\to\!\infty}\!\RMI{\rno}{\pma{}{(\ind)}}{\Wm}\!=\!\CRC{\rno}{\!\Wm\!}{\cset}<\infty\)
for a \(\{\!\pma{}{(\ind)}\!\}_{\ind\in\integers{+}}\!\subset\!\cset\) then
\(\{\qmn{\rno,\pma{}{(\ind)}}\}_{\ind\in\integers{+}}\)  
is a Cauchy sequence  for the total variation metric on \(\pmea{\outA}\) 
and \(\qmn{\rno,\Wm\!,\cset}\) is its unique limit point.
\end{theorem}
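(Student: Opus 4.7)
My plan is to mirror the maximizing-sequence approach used for the Renyi setting in \cite[\S\ref*{A-sec:capacity}]{nakiboglu16A}. The easy direction $\sup\inf\leq\inf\sup$ is standard, so the crux is to exhibit a single $\mQ^{\star}\in\pmea{\outA}$ that witnesses $\sup_{\mP\in\cset}\CRD{\rno}{\Wm}{\mQ^{\star}}{\mP}\leq\CRC{\rno}{\Wm}{\cset}$; this candidate will simultaneously serve as the Augustin center. If $\CRC{\rno}{\Wm}{\cset}=\infty$ the assertions are vacuous, so I fix a maximizing sequence $\{\pma{}{(\ind)}\}\subset\cset$ with $\RMI{\rno}{\pma{}{(\ind)}}{\Wm}\to\CRC{\rno}{\Wm}{\cset}<\infty$ and extract $\qmn{\rno,\Wm,\cset}$ as the total-variation limit of $\{\qmn{\rno,\pma{}{(\ind)}}\}$, whose existence is guaranteed by Lemma \ref{lem:information}.

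First I would establish the Cauchy property. For any $\ind,\jnd$, convexity of $\cset$ places the midpoint $\mP_{\ind,\jnd}\DEF(\pma{}{(\ind)}+\pma{}{(\jnd)})/2$ in $\cset$. Applying the Erven-Harremoes bound (\ref{eq:lem:informationEHB}) (or (\ref{eq:lem:informationEHB-one}) when $\rno=1$) at $\mQ=\qmn{\rno,\mP_{\ind,\jnd}}$ to both $\pma{}{(\ind)}$ and $\pma{}{(\jnd)}$, averaging, and identifying the resulting left-hand side with $\RMI{\rno}{\mP_{\ind,\jnd}}{\Wm}\leq\CRC{\rno}{\Wm}{\cset}$ should yield
\begin{align*}
\RD{\rno}{\qmn{\rno,\pma{}{(\ind)}}}{\qmn{\rno,\mP_{\ind,\jnd}}}+\RD{\rno}{\qmn{\rno,\pma{}{(\jnd)}}}{\qmn{\rno,\mP_{\ind,\jnd}}}\leq 2\CRC{\rno}{\Wm}{\cset}-\RMI{\rno}{\pma{}{(\ind)}}{\Wm}-\RMI{\rno}{\pma{}{(\jnd)}}{\Wm},
\end{align*}
whose right-hand side vanishes along the maximizing sequence. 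A Pinsker-type inequality for Renyi divergence of order $\rno\in(0,1]$ would convert this into total-variation smallness of $\qmn{\rno,\pma{}{(\ind)}}-\qmn{\rno,\mP_{\ind,\jnd}}$ and likewise for $\jnd$; the triangle inequality then closes the Cauchy argument. Completeness of $(\pmea{\outA},\lon{\cdot})$ delivers the candidate limit $\qmn{\rno,\Wm,\cset}$.

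To verify the saddle-point property I would fix $\mP\in\cset$ and consider the mixture $\mP_{\lambda,\ind}\DEF(1-\lambda)\pma{}{(\ind)}+\lambda\mP\in\cset$ for $\lambda\in(0,1]$. Linearity of $\CRD{\rno}{\Wm}{\mQ}{\cdot}$ in the prior, combined with $\RMI{\rno}{\mP_{\lambda,\ind}}{\Wm}\leq\CRC{\rno}{\Wm}{\cset}$ and the Erven-Harremoes bound applied at $\pma{}{(\ind)}$ with $\mQ=\qmn{\rno,\mP_{\lambda,\ind}}$, gives
\begin{align*}
\CRD{\rno}{\Wm}{\qmn{\rno,\mP_{\lambda,\ind}}}{\mP}\leq\tfrac{1}{\lambda}\bigl[\CRC{\rno}{\Wm}{\cset}-(1-\lambda)\RMI{\rno}{\pma{}{(\ind)}}{\Wm}\bigr],
\end{align*}
whose right-hand side tends to $\CRC{\rno}{\Wm}{\cset}$ as $\ind\to\infty$ for each fixed $\lambda$. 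Since $\{\mP_{\lambda,\ind}\}_{\ind}$ is itself a maximizing sequence in $\cset$, the Cauchy step already executed forces $\qmn{\rno,\mP_{\lambda,\ind}}\to\qmn{\rno,\Wm,\cset}$ in total variation. Lower semicontinuity of $\CRD{\rno}{\Wm}{\cdot}{\mP}$ in its second argument \cite[Lem \ref*{A-lem:divergencelsc}]{nakiboglu16A} then yields $\CRD{\rno}{\Wm}{\qmn{\rno,\Wm,\cset}}{\mP}\leq\CRC{\rno}{\Wm}{\cset}$ for every $\mP\in\cset$, delivering both the minimax identity and the center property. Uniqueness of $\qmn{\rno,\Wm,\cset}$ follows from the strict convexity of $\RD{\rno}{\Wm(\dinp)}{\cdot}$ in $\mQ$ \cite[Thm 12]{ervenH14}: any convex combination of two distinct centers would strictly lower the sup, contradicting the minimax value.

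The hard part will be the Pinsker-type step: converting the vanishing of the two Renyi divergences into total-variation convergence requires a quantitative $\RD{\rno}{\cdot}{\cdot}\geq c_{\rno}\lon{\cdot-\cdot}^{2}$ bound uniform in $\rno\in(0,1]$ and stable along the maximizing sequence. This is precisely the obstacle encountered in the Renyi treatment of \cite{nakiboglu16A}, and the argument used there should transfer with minor adaptations to the Augustin mean in place of the Renyi mean.
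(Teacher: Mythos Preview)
Your overall strategy is exactly what the paper indicates: mirror the Renyi-capacity argument of \cite{nakiboglu16A}, substituting Lemma~\ref{lem:information} (in particular the bound (\ref{eq:lem:informationEHB})) for the Renyi-information lemma used there. The Cauchy step via midpoints and a Pinsker-type conversion is the right mechanism, and the paper's proof proceeds along these lines.

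There is, however, a genuine gap in your saddle-point verification. You assert that for fixed \(\lambda\in(0,1]\) the sequence \(\{\mP_{\lambda,\ind}\}_{\ind}\) is itself maximizing, but concavity of \(\RMI{\rno}{\cdot}{\Wm}\) (it is an infimum of functions linear in \(\mP\)) only gives
\(\liminf_{\ind}\RMI{\rno}{\mP_{\lambda,\ind}}{\Wm}\geq(1-\lambda)\CRC{\rno}{\Wm}{\cset}+\lambda\,\RMI{\rno}{\mP}{\Wm}\),
which is strictly below \(\CRC{\rno}{\Wm}{\cset}\) whenever \(\RMI{\rno}{\mP}{\Wm}<\CRC{\rno}{\Wm}{\cset}\). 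Thus \(\qmn{\rno,\mP_{\lambda,\ind}}\) need not converge to \(\qmn{\rno,\Wm,\cset}\), and the lower-semicontinuity step cannot be applied at the intended limit. The repair is a diagonal choice: set \(\epsilon_{\ind}\DEF\CRC{\rno}{\Wm}{\cset}-\RMI{\rno}{\pma{}{(\ind)}}{\Wm}\downarrow0\) and take \(\lambda_{\ind}\downarrow0\) with \(\epsilon_{\ind}/\lambda_{\ind}\to0\), e.g.\ \(\lambda_{\ind}=\sqrt{\epsilon_{\ind}}\). Then \(\{\mP_{\lambda_{\ind},\ind}\}\) \emph{is} maximizing, so (after first showing that any two maximizing sequences share the same limit, by interleaving them) \(\qmn{\rno,\mP_{\lambda_{\ind},\ind}}\to\qmn{\rno,\Wm,\cset}\); simultaneously your displayed bound becomes
\(\CRD{\rno}{\Wm}{\qmn{\rno,\mP_{\lambda_{\ind},\ind}}}{\mP}\leq\CRC{\rno}{\Wm}{\cset}+(1-\lambda_{\ind})\epsilon_{\ind}/\lambda_{\ind}\to\CRC{\rno}{\Wm}{\cset}\),
and lower semicontinuity finishes the argument.

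A smaller point: your uniqueness argument via strict convexity of \(\RD{\rno}{\Wm(\dinp)}{\cdot}\) tacitly assumes that a strict decrease at some \(\dinp\) propagates to a strict decrease of the supremum over \(\cset\); this needs a line of justification. It is cleaner to deduce uniqueness from the construction itself: once you know every maximizing sequence produces the same total-variation limit, any \(\mQ\) achieving \(\sup_{\mP\in\cset}\CRD{\rno}{\Wm}{\mQ}{\mP}=\CRC{\rno}{\Wm}{\cset}\) can be compared to \(\qmn{\rno,\Wm,\cset}\) via (\ref{eq:lem:informationEHB}) along a maximizing sequence, forcing \(\RD{\rno}{\qmn{\rno,\Wm,\cset}}{\mQ}=0\).
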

Lemma \ref{lem:information} and  Theorem \ref{thm:minimax} 
imply for all \(\rno\!\in\!(0,1]\), \(\mP\!\in\!\cset\) that
\begin{align}
\notag
\CRC{\rno}{\Wm\!}{\cset}-\RMI{\rno}{\mP}{\Wm}
&\geq  \RD{\rno}{\qmn{\rno,\mP}}{\qmn{\rno,\Wm,\cset}}.
\end{align}
Using Lemma \ref{lem:information} and Theorem \ref{thm:minimax} we can prove the following 
Erven-Harremoes bound for Augustin capacity.
\begin{lemma}\label{lem:EHB}
For any \(\rno\!\in\!(0,1]\),\(\Wm\!:\!\inpS\!\to\!\pmea{\outA}\), and
convex \(\cset\!\subset\!\pdis{\inpS}\) s.t. 
\(\CRC{\rno}{\Wm\!}{\cset}<\infty\), and \(\mQ \in \pmea{\outA}\)
\begin{align}
\notag
\sup\nolimits_{\mP \in \cset} \CRD{\rno}{\Wm}{\mQ}{\mP}
&\geq  \CRC{\rno}{\Wm\!}{\cset}+\RD{\rno}{\qmn{\rno,\Wm,\cset}}{\mQ}.
\end{align}
\end{lemma}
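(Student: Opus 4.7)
The plan is to derive the claimed inequality from the single-prior Erven-Harremoes bound of Lemma \ref{lem:information} combined with the total variation convergence of Augustin means to the Augustin center furnished by Theorem \ref{thm:minimax}. The key observation is that for each \(\mP \in \cset\) the single-prior bound
\[
\CRD{\rno}{\Wm}{\mQ}{\mP} \geq \RMI{\rno}{\mP}{\Wm} + \RD{\rno}{\qmn{\rno,\mP}}{\mQ}
\]
holds in the \(\rno \in (0,1)\) case by inequality (\ref{eq:lem:informationEHB}), and in the \(\rno = 1\) case as an equality by (\ref{eq:lem:informationEHB-one}). Hence, to obtain the capacity version it suffices to pass to a limit along a sequence whose Augustin information tends to \(\CRC{\rno}{\Wm}{\cset}\).

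Concretely, I would fix a sequence \(\{\pma{}{(\ind)}\}_{\ind\in\integers{+}} \subset \cset\) with \(\lim_{\ind} \RMI{\rno}{\pma{}{(\ind)}}{\Wm} = \CRC{\rno}{\Wm}{\cset}\), which exists since the capacity is defined as a supremum and is assumed finite. Theorem \ref{thm:minimax} then guarantees that the corresponding Augustin means \(\qmn{\rno,\pma{}{(\ind)}}\) form a Cauchy sequence in total variation converging to the Augustin center \(\qmn{\rno,\Wm,\cset}\). For every \(\ind\), applying the single-prior bound to \(\pma{}{(\ind)}\) and bounding the left hand side from above by the supremum gives
\[
\sup\nolimits_{\mP \in \cset} \CRD{\rno}{\Wm}{\mQ}{\mP} \geq \RMI{\rno}{\pma{}{(\ind)}}{\Wm} + \RD{\rno}{\qmn{\rno,\pma{}{(\ind)}}}{\mQ}.
\]
Taking the \(\liminf\) in \(\ind\) on the right, the first summand converges to \(\CRC{\rno}{\Wm}{\cset}\) by construction, and the second summand contributes at least \(\RD{\rno}{\qmn{\rno,\Wm,\cset}}{\mQ}\) provided Renyi divergence is lower semicontinuous in its first argument along total-variation-convergent sequences.

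The main obstacle is precisely this last lower semicontinuity step: one must argue that \(\liminf_{\ind} \RD{\rno}{\qmn{\rno,\pma{}{(\ind)}}}{\mQ} \geq \RD{\rno}{\qmn{\rno,\Wm,\cset}}{\mQ}\) whenever \(\qmn{\rno,\pma{}{(\ind)}} \to \qmn{\rno,\Wm,\cset}\) in total variation. This is a standard property of Renyi divergence which the paper has already invoked via \cite[Lem \ref*{A-lem:divergencelsc}]{nakiboglu16A} in the existence argument for the Augustin mean on finite output alphabets; for \(\rno = 1\) it reduces to the classical lower semicontinuity of Kullback-Leibler divergence. Once this ingredient is in place, assembling the three pieces -- the single-prior bound, the convergence of the information term to \(\CRC{\rno}{\Wm}{\cset}\), and the lower semicontinuity of divergence along the approximating sequence -- yields the desired Erven-Harremoes bound for Augustin capacity.
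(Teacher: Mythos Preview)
Your proposal is correct and matches the route the paper indicates (``Using Lemma \ref{lem:information} and Theorem \ref{thm:minimax}''), which is not spelled out further in the paper. One small remark: the lower semicontinuity you need in the last step is in the \emph{first} argument of \(\RD{\rno}{\cdot}{\mQ}\), whereas the instance you cite from the paper (\cite[Lem \ref*{A-lem:divergencelsc}]{nakiboglu16A}) was used there for the second argument; the property you need is of course equally standard (joint lower semicontinuity, e.g.\ \cite[Thm~15]{ervenH14}), so this does not affect the validity of your argument.
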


Erven-Harremoes bound, the continuity of \(\CRC{\rno}{\Wm\!}{\cset}\) in \(\rno\), and  
Pinsker's inequality imply the continuity of \(\qmn{\rno,\Wm,\cset}\) in  \(\rno\) for the 
total variation topology  on \(\pmea{\outA}\).

\begin{lemma}\label{lem:centercontinuity} 
For any \(\rnt\!\in\!(0,1]\),\(\Wm\!:\!\inpS\!\to\!\pmea{\outA}\),
convex \(\cset\!\subset\!\pdis{\inpS}\) 
s.t. \(\CRC{\rnt}{\Wm}{\cset}<\infty\),
and  \(\rno\), \(\rnf\) satisfying \(0<\rno<\rnf\leq\rnt\),
\begin{align}
\notag
\CRC{\rnf}{\Wm}{\cset}-\CRC{\rno}{\Wm\!}{\cset}
&\geq 
\RD{\rno}{\qmn{\rno,\Wm,\cset}}{\qmn{\rnf,\Wm,\cset}}.
\end{align}
Furthermore, \(\qmn{\rno,\Wm,\cset}:(0,\rnt]\to \pmea{\outA}\) 
is continuous in \(\rno\) for the total variation topology  on \(\pmea{\outA}\).
\end{lemma}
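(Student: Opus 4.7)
My plan is to first establish the inequality and then deduce the continuity claim from it by combining continuity of the Augustin capacity in $\rno$ with Pinsker's inequality. The key ingredients are all previously established: Lemma \ref{lem:EHB}, Theorem \ref{thm:minimax}, monotonicity of Renyi divergence in its order, and the continuity claim in Lemma \ref{lem:capacityO}.

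For the inequality, I would invoke the Erven-Harremoes bound of Lemma \ref{lem:EHB} with the test measure $\mQ=\qmn{\rnf,\Wm,\cset}$, which gives
\begin{align*}
\sup_{\mP\in\cset}\CRD{\rno}{\Wm}{\qmn{\rnf,\Wm,\cset}}{\mP}
\geq \CRC{\rno}{\Wm\!}{\cset}+\RD{\rno}{\qmn{\rno,\Wm,\cset}}{\qmn{\rnf,\Wm,\cset}}.
\end{align*}
Since $\rno\leq\rnf$ and the Renyi divergence $\RD{\rno}{\Wm(\dinp)}{\mQ}$ is non-decreasing in $\rno$ for fixed $\dinp$ and $\mQ$, the integrand on the left is pointwise dominated by $\RD{\rnf}{\Wm(\dinp)}{\qmn{\rnf,\Wm,\cset}}$. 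Integrating against any $\mP\in\cset$ and then taking the supremum, the left side is therefore at most $\sup_{\mP\in\cset}\CRD{\rnf}{\Wm}{\qmn{\rnf,\Wm,\cset}}{\mP}=\CRC{\rnf}{\Wm\!}{\cset}$, where the equality is the Augustin-center identity in Theorem \ref{thm:minimax}. Chaining the two bounds yields the stated inequality.

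For continuity at an arbitrary $\rno_0\in(0,\rnt]$, I would apply the just-proved inequality in both orientations: for $0<\rno<\rno_0$ it reads $\RD{\rno}{\qmn{\rno,\Wm,\cset}}{\qmn{\rno_0,\Wm,\cset}}\leq \CRC{\rno_0}{\Wm\!}{\cset}-\CRC{\rno}{\Wm\!}{\cset}$, and for $\rno_0<\rno\leq\rnt$ it reads $\RD{\rno_0}{\qmn{\rno_0,\Wm,\cset}}{\qmn{\rno,\Wm,\cset}}\leq \CRC{\rno}{\Wm\!}{\cset}-\CRC{\rno_0}{\Wm\!}{\cset}$. Both right-hand sides vanish as $\rno\to\rno_0$ by continuity of $\CRC{\rno}{\Wm\!}{\cset}$ in $\rno$ (Lemma \ref{lem:capacityO}(\ref{capacityO-increasing})). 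Pinsker's inequality for the Renyi divergence at orders in $(0,1]$, namely $\RD{\rno}{\mP}{\mQ}\geq \tfrac{\rno}{2}\lon{\mP-\mQ}^2$, then converts vanishing divergence into vanishing total variation. The only subtle point is that the Pinsker constant $\rno/2$ degenerates as $\rno\to 0^+$ in the left-continuity step, but this is harmless at any fixed $\rno_0>0$ since restricting $\rno$ to $[\rno_0/2,\rno_0)$ keeps the constant bounded below by $\rno_0/4$. Apart from this minor care, the argument is a direct assembly of earlier results and presents no real obstacle.
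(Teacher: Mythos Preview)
Your proposal is correct and follows essentially the same route the paper indicates: the paragraph immediately preceding Lemma~\ref{lem:centercontinuity} says exactly that the Erven--Harremo\"es bound (Lemma~\ref{lem:EHB}), continuity of $\CRC{\rno}{\Wm\!}{\cset}$ in $\rno$, and Pinsker's inequality together give the result. Your derivation of the displayed inequality via Lemma~\ref{lem:EHB} with $\mQ=\qmn{\rnf,\Wm,\cset}$, order-monotonicity of the R\'enyi divergence, and Theorem~\ref{thm:minimax} is precisely the intended argument, and your handling of the $\rno$-dependent Pinsker constant on the left side is a minor bookkeeping point rather than a new idea.
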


\begin{lemma}\label{lem:capacityproduct}
For any \(\rno\in(0,1]\), product channel \(\Wm\) for a finite index set \(\tinS\),
convex sets \(\cset_{\tin}\subset\pdis{\inpS_{\tin}}\) for each \(\tin\in\tinS\), and
\(\cset=\conv{\{\oprod{\tin\in\tinS}\pmn{\tin}:\pmn{\tin}\in\pdis{\inpS_{\tin}}~\forall\tin\in\tinS\}}\)
\begin{align}
\notag
\CRC{\rno}{\Wm\!}{\cset}
&=\sum\nolimits_{\tin\in\tinS}\CRC{\rno}{\Wmn{\tin}}{\cset_{\tin}}.
\end{align}
Furthermore, if \(\CRC{\rno}{\Wm\!}{\cset}<\infty\) then 
\(\qmn{\rno,\Wm,\cset}=\oprod{\tin\in\tinS}\qmn{\rno,\Wmn{\tin},\cset_{\tin}}\).
\end{lemma}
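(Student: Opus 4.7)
The plan is to prove the capacity identity via matching upper and lower bounds and then identify the product center using the uniqueness clause of Theorem~\ref{thm:minimax}. Throughout I will let $\mQ^{\star}\DEF\oprod{\tin\in\tinS}\qmn{\rno,\Wmn{\tin},\cset_{\tin}}$, well-defined whenever each component capacity is finite; the case in which some component capacity is infinite will follow from the easy direction below (which forces $\CRC{\rno}{\Wm}{\cset}=\infty$) and requires no center claim.

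For the lower bound $\CRC{\rno}{\Wm}{\cset}\geq\sum_{\tin\in\tinS}\CRC{\rno}{\Wmn{\tin}}{\cset_{\tin}}$ I would pick arbitrary $\pmn{\tin}\in\cset_{\tin}$ and form the product prior $\mP\DEF\oprod{\tin\in\tinS}\pmn{\tin}$, which lies in $\cset$ by construction. Lemma~\ref{lem:information}(\ref{information:product}) then yields $\RMI{\rno}{\mP}{\Wm}=\sum_{\tin\in\tinS}\RMI{\rno}{\pmn{\tin}}{\Wmn{\tin}}$, and supremizing coordinatewise over $\pmn{\tin}\in\cset_{\tin}$ gives the desired inequality.

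For the reverse inequality I would invoke the minimax identity of Theorem~\ref{thm:minimax}, namely $\CRC{\rno}{\Wm}{\cset}=\inf_{\mQ}\sup_{\mP\in\cset}\CRD{\rno}{\Wm}{\mQ}{\mP}$, and bound the inner supremum at $\mQ=\mQ^{\star}$. The factorization of the Radon-Nikodym derivative of a product channel with respect to a product reference measure gives the additivity $\RD{\rno}{\Wm(\dinp)}{\mQ^{\star}}=\sum_{\tin}\RD{\rno}{\Wmn{\tin}(\dinp_{\tin})}{\qmn{\rno,\Wmn{\tin},\cset_{\tin}}}$, so for any product prior $\mP'=\oprod{\tin}\pmn{\tin}$ with $\pmn{\tin}\in\cset_{\tin}$,
\begin{align*}
\CRD{\rno}{\Wm}{\mQ^{\star}}{\mP'}
&=\sum\nolimits_{\tin\in\tinS}\CRD{\rno}{\Wmn{\tin}}{\qmn{\rno,\Wmn{\tin},\cset_{\tin}}}{\pmn{\tin}}.
\end{align*}
Each summand is at most $\CRC{\rno}{\Wmn{\tin}}{\cset_{\tin}}$ by the defining sup-property of the component Augustin center in Theorem~\ref{thm:minimax} (equivalently, via Lemma~\ref{lem:EHB} applied with $\mQ=\qmn{\rno,\Wmn{\tin},\cset_{\tin}}$). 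Because $\mP\mapsto\CRD{\rno}{\Wm}{\mQ^{\star}}{\mP}$ is linear in $\mP$, this bound propagates from product priors to their convex combinations, and hence to every $\mP\in\cset$, yielding $\sup_{\mP\in\cset}\CRD{\rno}{\Wm}{\mQ^{\star}}{\mP}\leq\sum_{\tin}\CRC{\rno}{\Wmn{\tin}}{\cset_{\tin}}$.

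Combining the two inequalities gives the capacity formula. For the center identification, the upper-bound computation together with the minimax equality forces $\sup_{\mP\in\cset}\CRD{\rno}{\Wm}{\mQ^{\star}}{\mP}=\CRC{\rno}{\Wm}{\cset}$, so $\mQ^{\star}$ satisfies the defining relation of the Augustin center; the uniqueness clause of Theorem~\ref{thm:minimax} then pins down $\qmn{\rno,\Wm,\cset}=\mQ^{\star}$. The main technical point, though routine, is the product additivity of $\RD{\rno}{\cdot}{\cdot}$ against $(\Wm,\mQ^{\star})$; everything else reduces to a clean application of minimax and linearity, so I expect no substantive obstacle beyond writing out that factorization carefully.
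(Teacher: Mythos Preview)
Your argument is correct and is precisely the template the paper points to: the paper does not prove Lemma~\ref{lem:capacityproduct} here but defers to \cite{nakiboglu16C}, noting only that the proof mirrors the R\'enyi-capacity version in \cite{nakiboglu16A} with Lemma~\ref{lem:information} in place of its R\'enyi analogue. Your proof is exactly that---lower bound via product priors and Lemma~\ref{lem:information}(\ref{information:product}), upper bound by evaluating the minimax of Theorem~\ref{thm:minimax} at the product of component centers and using linearity of \(\mP\mapsto\CRD{\rno}{\Wm}{\mQ^{\star}}{\mP}\) to pass to the convex hull, then uniqueness in Theorem~\ref{thm:minimax} for the center; the only cosmetic slip is that the ``each summand \(\leq\CRC{\rno}{\Wmn{\tin}}{\cset_{\tin}}\)'' step comes straight from the center identity in Theorem~\ref{thm:minimax}, not from Lemma~\ref{lem:EHB}.
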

\section{The Cost Constrained Augustin Capacity}\label{sec:costconstrained}
With a slight abuse of notation we define the cost constrained Augustin capacity as 
\begin{align}
\notag
\CRC{\rno}{\Wm\!}{\cc}
&\DEF \sup\nolimits_{\mP\in\cset(\cc)} \RMI{\rno}{\mP}{\Wm}
&
&\forall\cc\in \fcc{\cf}
\end{align}
where \(\cset(\cc)\DEF \{\mP\in\pdis{\inpS}:\sum\nolimits_{\dinp} \mP(\dinp) \cf(\dinp)\leq \cc\}\).
Note that  Theorem \ref{thm:minimax} and Lemmas \ref{lem:EHB} and \ref{lem:centercontinuity}
hold for \(\CRC{\rno}{\Wm\!}{\cc}\) because \(\cset(\cc)\) is a convex set.
We denote Augustin center by \(\qmn{\rno,\Wm\!,\cc}\).

\begin{lemma}\label{lem:CCcapacity}
For any  \(\rno\in(0,1]\), \(\Wm:\inpS\to \pmea{\outA}\), \(\cf:\inpS\to \reals{\geq0}^{\lnd}\),
\begin{enumerate}[(a)]
\item\label{CCcapacity:function} 
\(\CRC{\rno}{\!\Wm\!}{\cc}\!:\!\fcc{\cf}\!\to\![0,\infty]\) is increasing and concave in \(\cc\).
It is either infinite \(\forall\!\cc\!\in\!\inte{\!\fcc{\!\cf}}\)
or finite and continuous\! on \(\!\inte{\!\fcc{\!\cf}}\). 

\item\label{CCcapacity:interior}
If \(\CRC{\rno}{\!\Wm\!}{\cc}<\infty\) for a \(\cc\in \inte{\fcc{\cf}}\) then 
\(\exists\lm_{\rno,\Wm\!,\cc}\in \reals{\geq0}^{\lnd}\) s.t.
\begin{align}
\notag
\CRC{\rno}{\Wm\!}{\cc}+\lm_{\rno,\Wm\!,\cc}\cdot(\tilde{\cc}-\cc)
&\geq \CRC{\rno}{\Wm}{\tilde{\cc}}
&
&\forall \tilde{\cc}\in\fcc{\cf}.
\end{align}
\!\!The set of all such \(\lm_{\rno,\!\Wm\!,\cc}\!\)'s for an \(\rno\) is convex and compact.
\end{enumerate}
\end{lemma}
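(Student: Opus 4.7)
The plan is to reduce both parts to standard convex-analytic facts about concave functions on convex subsets of $\reals{}^{\lnd}$. The only nontrivial input I need is concavity of $\RMI{\rno}{\mP}{\Wm}$ in $\mP$, which follows immediately from the definition: $\CRD{\rno}{\Wm}{\mQ}{\mP}=\sum_{\dinp}\mP(\dinp)\RD{\rno}{\Wm(\dinp)}{\mQ}$ is linear in $\mP$, so $\RMI{\rno}{\mP}{\Wm}=\inf_{\mQ}\CRD{\rno}{\Wm}{\mQ}{\mP}$ is concave in $\mP$ as an infimum of affine functionals.

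For (a), monotonicity of $\CRC{\rno}{\Wm}{\cdot}$ is immediate because $\cset(\cc_{1})\subseteq\cset(\cc_{2})$ whenever $\cc_{1}\leq\cc_{2}$ componentwise. For concavity, given $\mP_{\ind}\in\cset(\cc_{\ind})$ for $\ind=1,2$ and $\beta\in[0,1]$, the mixture $\beta\mP_{1}+(1-\beta)\mP_{2}$ lies in $\cset(\beta\cc_{1}+(1-\beta)\cc_{2})$ by linearity of the constraint, and taking independent suprema in the resulting inequality for $\RMI{\rno}{\cdot}{\Wm}$ yields concavity of $\CRC{\rno}{\Wm}{\cdot}$. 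For the dichotomy, suppose $\CRC{\rno}{\Wm}{\cc_{0}}=\infty$ at some interior $\cc_{0}$ and fix any other interior $\tilde{\cc}$; since $\tilde{\cc}\in\inte{\fcc{\cf}}$ one can pick $s>0$ small enough that $\cc^{\star}\DEF\tilde{\cc}-s(\cc_{0}-\tilde{\cc})\in\fcc{\cf}$, and then $\tilde{\cc}=\tfrac{s}{1+s}\cc_{0}+\tfrac{1}{1+s}\cc^{\star}$ combined with $\CRC{\rno}{\Wm}{\cc^{\star}}\geq 0$ and concavity forces $\CRC{\rno}{\Wm}{\tilde{\cc}}=\infty$. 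Continuity on $\inte{\fcc{\cf}}$ in the remaining case is the standard fact that a concave function finite on an open convex subset of $\reals{}^{\lnd}$ is automatically continuous there.

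For (b), under the finiteness hypothesis the previous steps show that $\CRC{\rno}{\Wm}{\cdot}$ is finite and concave on the open convex set $\inte{\fcc{\cf}}$. Applying the supporting hyperplane theorem to its hypograph at $(\cc,\CRC{\rno}{\Wm}{\cc})$ produces a $\lm\in\reals{}^{\lnd}$ satisfying the stated supergradient inequality on all of $\fcc{\cf}$. Taking $\tilde{\cc}$ obtained from $\cc$ by a small positive bump in a single coordinate (feasible because $\cc$ is interior) and combining with monotonicity yields $\lm^{\ind}\geq 0$ for every $\ind$, so $\lm\in\reals{\geq0}^{\lnd}$. Convexity and closedness of the set of such $\lm$ follow from its description as an intersection of closed half-spaces; boundedness, hence compactness, follows from the local Lipschitz bound enjoyed by any concave function finite on an open convex subset of $\reals{}^{\lnd}$. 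The only genuinely delicate step is the dichotomy in (a), where the function takes values in $[0,\infty]$ and one must verify that concavity alone, together with the availability of interior directions on both sides of every $\tilde{\cc}$, suffices to propagate infinity across $\inte{\fcc{\cf}}$; everything else is a clean application of finite-dimensional convex analysis.
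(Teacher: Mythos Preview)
The paper does not actually prove this lemma in the text; it is stated without proof, with detailed arguments deferred to the companion manuscript \cite{nakiboglu16C}. Your argument is correct and is precisely the standard convex-analysis route one expects for such a statement: concavity of \(\RMI{\rno}{\mP}{\Wm}\) in \(\mP\) (as an infimum of affine functionals), transfer of concavity to \(\CRC{\rno}{\Wm}{\cc}\) via the linear constraint map, the dichotomy for \([0,\infty]\)-valued concave functions on an open convex set, and then nonemptiness, convexity, and compactness of the superdifferential at an interior point. The only point worth making explicit is that the supergradient inequality, once obtained on \(\inte{\fcc{\cf}}\), extends to all of \(\fcc{\cf}\) by approaching a boundary \(\tilde{\cc}\) along the segment from \(\cc\) and using concavity of \(\CRC{\rno}{\Wm}{\cdot}\) on the full domain; you implicitly rely on this when you say the hyperplane supports the entire hypograph, and it is routine, but since \(\CRC{\rno}{\Wm}{\cdot}\) need not be finite or continuous at boundary points it is worth a one-line justification.
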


\begin{lemma}\label{lem:CCcapacityproduct}
For any \(\rno\in(0,1]\), product channel \(\Wm\) for a finite index set \(\tinS\), additive 
cost function \(\cf:\inpS\to\reals{\geq0}^{\lnd}\) satisfying 
\(\cf(\dinp)=\sum_{\tin\in\tinS} \cf_{\tin}(\dinp_{\tin})\) for some 
\(\cf_{\tin}:\inpS_{\tin}\to\reals{\geq0}^{\lnd}\)
and \(\cc\in \fcc{\cf}\)
\begin{align}
\notag
\CRC{\rno}{\Wm\!}{\cc}
&=\sup\left\{\sum\nolimits_{\tin\in \tinS} \CRC{\rno}{\Wmn{\tin}}{\cc_{\tin}}:
\sum\nolimits_{\tin\in \tinS} \cc_{\tin}\leq \cc,~
\cc_{\tin}\in\fcc{\cf_{\tin}} \right\}
\end{align}
If \(\exists\{\cc_{\tin}\}_{\tin\in \tinS}\) s.t.
\(\CRC{\rno}{\Wm\!}{\cc}=\sum\nolimits_{\tin\in \tinS} \CRC{\rno}{\Wmn{\tin}}{\cc_{\tin}}\)
and \(\CRC{\rno}{\Wm\!}{\cc}<\infty\)
then \(\qmn{\rno,\Wm\!,\cc}=\oprod{\tin\in \tinS} \qmn{\rno,\Wmn{\tin},\cc_{\tin}}\).
\end{lemma}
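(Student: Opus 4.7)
The plan is to establish the two inequalities for the capacity identity separately, and then to deduce the factorization of the center from Theorem \ref{thm:minimax}, Lemma \ref{lem:information}(c), and the fact that coordinate-wise total-variation convergence transfers to the product measure.

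For the $\geq$ direction I would fix any admissible decomposition $\{\cc_\tin\}_{\tin\in\tinS}$ with $\sum_\tin\cc_\tin\leq\cc$ and $\cc_\tin\in\fcc{\cf_\tin}$, pick priors $\pmn{\tin}$ feasible for $\cc_\tin$ on $\Wmn{\tin}$, and observe that $\oprod{\tin\in\tinS}\pmn{\tin}\in\cset(\cc)$ by additivity of $\cf$. Lemma \ref{lem:information}(c) then gives $\RMI{\rno}{\oprod{\tin\in\tinS}\pmn{\tin}}{\Wm}=\sum_{\tin\in\tinS}\RMI{\rno}{\pmn{\tin}}{\Wmn{\tin}}$, and taking suprema coordinate-wise yields the $\geq$ inequality. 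For the $\leq$ direction I would start from an arbitrary $\mP\in\cset(\cc)$, let $\pmn{\tin}$ be its marginal on $\inpS_\tin$, and set $\cc_\tin\DEF\sum_{\dinp_\tin}\pmn{\tin}(\dinp_\tin)\cf_\tin(\dinp_\tin)$; additivity of $\cf$ together with Fubini immediately gives $\sum_\tin\cc_\tin\leq\cc$ and $\pmn{\tin}$ is feasible for $\cc_\tin$. Testing $\RMI{\rno}{\mP}{\Wm}$ against the product measure $\mQ\DEF\oprod{\tin\in\tinS}\qmn{\rno,\pmn{\tin}}$ and invoking the coordinate-wise factorization $\RD{\rno}{\Wm(\dinp)}{\mQ}=\sum_\tin\RD{\rno}{\Wmn{\tin}(\dinp_\tin)}{\qmn{\rno,\pmn{\tin}}}$ valid for product channels and product test measures, a short swap of summations gives $\CRD{\rno}{\Wm}{\mQ}{\mP}=\sum_\tin\RMI{\rno}{\pmn{\tin}}{\Wmn{\tin}}\leq\sum_\tin\CRC{\rno}{\Wmn{\tin}}{\cc_\tin}$. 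Since $\RMI{\rno}{\mP}{\Wm}$ is an infimum over test measures, the desired bound follows.

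For the factorization of the center, assume the supremum is attained at some $\{\cc_\tin^*\}_{\tin\in\tinS}$ with finite value $\CRC{\rno}{\Wm}{\cc}$. I would select sequences $\{\pma{\tin}{(\ind)}\}_{\ind\in\integers{+}}$ feasible for $\cc_\tin^*$ on $\Wmn{\tin}$ with $\RMI{\rno}{\pma{\tin}{(\ind)}}{\Wmn{\tin}}\to\CRC{\rno}{\Wmn{\tin}}{\cc_\tin^*}$; their product $\oprod{\tin\in\tinS}\pma{\tin}{(\ind)}$ is feasible for $\cc$ and, by Lemma \ref{lem:information}(c), its Augustin information tends to $\sum_\tin\CRC{\rno}{\Wmn{\tin}}{\cc_\tin^*}=\CRC{\rno}{\Wm}{\cc}$. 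Theorem \ref{thm:minimax} then forces $\qmn{\rno,\oprod{\tin\in\tinS}\pma{\tin}{(\ind)}}\to\qmn{\rno,\Wm,\cc}$ in total variation, while the same theorem applied per coordinate gives $\qmn{\rno,\pma{\tin}{(\ind)}}\to\qmn{\rno,\Wmn{\tin},\cc_\tin^*}$ in total variation. Lemma \ref{lem:information}(c) identifies the joint mean as the product of coordinate means, and the triangle-type inequality $\lon{\oprod{\tin\in\tinS}\nmn{\tin}-\oprod{\tin\in\tinS}\nmn{\tin}'}\leq\sum_\tin\lon{\nmn{\tin}-\nmn{\tin}'}$ for total variation of product measures transfers coordinate-wise convergence to the product. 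Uniqueness of the limit then identifies $\qmn{\rno,\Wm,\cc}$ with $\oprod{\tin\in\tinS}\qmn{\rno,\Wmn{\tin},\cc_\tin^*}$. The argument is largely mechanical once Lemma \ref{lem:information}(c) and Theorem \ref{thm:minimax} are in hand; the main subtlety is keeping the marginalization, the cost decomposition and the product structure mutually consistent.
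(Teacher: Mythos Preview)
The paper itself does not contain a proof of this lemma; it defers all proofs in \S\ref{sec:capacity} and \S\ref{sec:costconstrained} to the companion manuscript \cite{nakiboglu16C}, remarking only that they parallel the \renyi-capacity proofs of \cite{nakiboglu16A} with Lemma~\ref{lem:information} substituted for the corresponding lemma there. So there is no in-paper argument to compare against line by line.

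That said, your proposal is sound and is exactly the kind of argument the paper's remark points to. Both inequalities for the capacity identity rest on Lemma~\ref{lem:information}(\ref{information:product}): for ``$\geq$'' via product priors built from feasible $\pmn{\tin}$'s, and for ``$\leq$'' via marginalizing an arbitrary $\mP\in\cset(\cc)$, testing against $\oprod{\tin}\qmn{\rno,\pmn{\tin}}$, and using additivity of $\RD{\rno}{\cdot}{\cdot}$ on tensor products. The center factorization then follows from the Cauchy-sequence clause of Theorem~\ref{thm:minimax} applied simultaneously on the product and on each factor, together with Lemma~\ref{lem:information}(\ref{information:product}) identifying $\qmn{\rno,\oprod{\tin}\pmn{\tin}}=\oprod{\tin}\qmn{\rno,\pmn{\tin}}$ and the standard bound $\lon{\oprod{\tin}\nmn{\tin}-\oprod{\tin}\nmn{\tin}'}\leq\sum_{\tin}\lon{\nmn{\tin}-\nmn{\tin}'}$. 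The only step worth making explicit when you write it out is that each $\cc_{\tin}\DEF\EXS{\pmn{\tin}}{\cf_{\tin}}$ indeed lies in $\fcc{\cf_{\tin}}$ (witnessed by $\pmn{\tin}$ itself), and that finiteness of $\CRC{\rno}{\Wm}{\cc}$ forces each $\CRC{\rno}{\Wmn{\tin}}{\cc_{\tin}^{*}}$ to be finite so that Theorem~\ref{thm:minimax} applies componentwise.
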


Since Augustin capacity is concave in the cost constraint by Lemma \ref{lem:CCcapacity}-(\ref{CCcapacity:function}),
\(\CRC{\rno}{\Wm\!}{\cc}=\sum\nolimits_{\tin\in \tinS} \CRC{\rno}{\Wmn{\tin}}{\frac{\cc}{\abs{\tinS}}}\)
whenever \(\Wm\) is stationary and \(\cf_{\tin}=\cf_{1}\) for all \(\tin\in\tinS\).
Alternatively, if \(\fcc{\cf_{\tin}}\)'s are closed and 
\(\CRC{\rno}{\Wmn{\tin}}{\cc}\)'s are upper semicontinuous functions of \(\cc\)
on \(\fcc{\cf_{\tin}}\)'s then we can use the extreme value theorem for the upper 
semicontinuous functions to establish the 
existence of a \(\{\cc_{\tin}\}_{\tin\in \tinS}\) s.t.
\(\CRC{\rno}{\Wm\!}{\cc}=\sum\nolimits_{\tin\in \tinS} \CRC{\rno}{\Wmn{\tin}}{\cc_{\tin}}\).
However, such an existence assertion does not hold in general.
\subsection{The A-L Information, Capacity, Center, and Radius}\label{sec:augustinlegendre}
This subsection is a generalization of parts of \cite[Ch. 8]{csiszarkorner}, 
which is confined to \(\abs{\inpS}\!\vee\!\abs{\outS}\!<\!\infty \), \(\rno\!=\!1\), and 
\(\lnd\!=\!1\) case.

For any \(\rno\!\in\!\reals{+}\), \(\Wm\!:\!\inpS\!\to\!\pmea{\outA}\), 
cost function \(\cf\!:\!\inpS\!\to\!\reals{\geq0}^{\lnd}\),
\(\lm\!\in\!\reals{\geq0}^{\lnd}\), and \(\mP\!\in\!\pdis{\inpS}\)
\emph{the order \(\rno\) Augustin-Legendre (A-L) information for prior \(\mP\) and 
Lagrange multiplier \(\lm\)} is
\begin{align}
\notag
\RMIL{\rno}{\mP}{\Wm}{\lm}
&\DEF \RMI{\rno}{\mP}{\Wm}-\lm\cdot \left(\sum\nolimits_{\dinp} \mP(\dinp) \cf(\dinp)\right).
\end{align}
We call \(\RMIL{\rno}{\mP}{\Wm}{\lm}\) A-L information because of the convex
conjugate pair \(\fX_{\rno,\mP}:\reals{\geq0}^{\lnd}\to (-\infty,\infty]\)
and  \(\fX_{\rno,\mP}^{*}:\reals{\leq0}^{\lnd}\to \reals{}\): 
\begin{align}
\notag
\fX_{\rno,\mP}(\cc)
&\DEF
\begin{cases}
-\RMI{\rno}{\mP}{\Wm}		&\cc\geq \EXS{\mP}{\cf}
\\
\infty						&\mbox{else}
\end{cases}
&
&=\sup_{\xi\leq 0} \xi \cdot \cc - \fX_{\rno,\mP}^{*}(\xi) 
\\
\notag
\fX_{\rno,\mP}^{*}(\xi)
&\DEF \sup\nolimits_{\cc \geq 0} \xi \cdot \cc -\fX_{\rno,\mP}(\cc) 
&
&=\xi \cdot \EXS{\mP}{\cf} +\RMI{\rno}{\mP}{\Wm}
\end{align}
Thus one can write \(\CRC{\rno}{\Wm\!}{\cc}\) in terms of \(\RMIL{\rno}{\mP}{\Wm}{\lm}\) as
\begin{align}
\notag
\CRC{\rno}{\Wm\!}{\cc}
&=\sup\nolimits_{\mP\in\pdis{\inpS}}\inf\nolimits_{\lm\geq 0} \RMIL{\rno}{\mP}{\Wm}{\lm}+\lm\cdot\cc.
\end{align}

\(\RMIL{\rno}{\mP}{\Wm}{\lm}\) is convex, decreasing and continuous in \(\lm\).
Furthermore, by Lemma \ref{lem:information} for \(\rno\in(0,1]\) we have:
\begin{align}
\notag
\RMIL{\rno}{\mP}{\Wm}{\lm}
&=\CRD{\rno}{\Wm}{\qmn{\rno,\mP}}{\mP}-\lm\cdot \EXS{\mP}{\cf}
\\
\notag
\CRD{\rno}{\Wm}{\mQ}{\mP}-\lm\cdot \EXS{\mP}{\cf}
&\geq \RMIL{\rno}{\mP}{\Wm}{\lm}+\RD{\rno}{\qmn{\rno,\mP}}{\mQ}.
\end{align} 
For any \(\rno\!\in\!(0,1]\), \(\Wm\!:\!\inpS\!\to\!\pmea{\outA}\), 
\(\cf\!:\!\inpS\!\to\!\reals{\geq0}^{\lnd}\),
and \(\lm\!\in\!\reals{\geq0}^{\lnd}\),
\emph{the A-L capacity} \(\RCL{\rno}{\Wm}{\lm}\)
and 
\emph{the A-L radius} \(\RRL{\rno}{\Wm}{\lm}\)
are given by
\begin{align}
\notag
\RCL{\rno}{\Wm}{\lm}
&\DEF \sup\nolimits_{\mP\in \pdis{\inpS}} \RMIL{\rno}{\mP}{\Wm}{\lm}
\\
\notag
\RRL{\rno}{\Wm}{\lm}
&\DEF \inf\nolimits_{\mQ\in\pmea{\outA}} \sup\nolimits_{\dinp\in \inpS} \RD{\rno}{\Wm(\dinp)}{\mQ}-\lm\cdot\cf(\dinp). 
\end{align}
Using the definition of \(\RMIL{\rno}{\mP}{\Wm}{\lm}\), \(\RMI{\rno}{\mP}{\Wm}\) and \(\RRL{\rno}{\Wm}{\lm}\) we get  
\begin{align}
\notag
\RCL{\rno}{\Wm}{\lm}
&\!=\!\sup\nolimits_{\mP \in \pdis{\inpS}}\inf\nolimits_{\mQ\in\pmea{\outA}} \CRD{\rno}{\Wm}{\mQ}{\mP}\!-\!\lm\cdot \EXS{\mP}{\cf}
\\
\notag
\RRL{\rno}{\Wm}{\lm}
&\!=\!\inf\nolimits_{\mQ\in\pmea{\outA}}\sup\nolimits_{\mP \in \pdis{\inpS}} \CRD{\rno}{\Wm}{\mQ}{\mP}\!-\!\lm\cdot \EXS{\mP}{\cf}.
\end{align}

\begin{lemma}\label{lem:Lcapacity}
For any \(\rno\!\in\!(0,1]\), \(\Wm\!:\!\inpS\!\to\!\pmea{\outA}\), 
\(\cf\!:\!\inpS\!\to\!\reals{\geq0}^{\lnd}\),
\begin{enumerate}[(a)]
\item\label{Lcapacity:function}
\(\RCL{\rno}{\!\Wm}{\lm}\) is convex, decreasing and lower semicontinuous in \(\lm\)
on \(\reals{\geq0}^{\lnd}\) and continuous in \(\lm\) on 
\(\{\!\lm\!:\!\exists\epsilon\!>\!0\!~s.t.\!~ \RCL{\rno}{\!\Wm}{\lm-\epsilon\uc}\!<\!\infty\!\}\).
\item\label{Lcapacity:maxmin}  
\(\CRC{\rno}{\Wm\!}{\cc} \leq \inf\nolimits_{\lm\geq 0} \RCL{\rno}{\!\Wm}{\lm}+\lm\cdot \cc\) 
for all \(\cc\in \fcc{\cf}\).
\item\label{Lcapacity:minimax} 
\(\CRC{\!\rno}{\!\Wm\!}{\cc}\!=\!\inf\nolimits_{\lm\geq0}\!\RCL{\rno}{\Wm}{\lm}\!+\!\lm\!\cdot\!\cc\)
if either \(\abs{\inpS}\!<\!\infty\) or \(\cc\in\inte{\fcc{\cf}}\).
\item\label{Lcapacity:interior} 
If \(\exists \cc\!\in\!\inte{\fcc{\cf}}\) s.t. \(\CRC{\rno}{\Wm\!}{\cc}\!<\!\infty\) 
then \(\forall\cc\!\in\!\inte{\fcc{\cf}}~\exists\lm\in\!\reals{\geq0}^{\lnd}\) 
s.t.  \(\CRC{\rno}{\Wm\!}{\cc}=\RCL{\rno}{\Wm}{\lm}+\lm\cdot \cc\).
\item\label{Lcapacity:optimal}
If \(\CRC{\rno}{\Wm\!}{\cc}\!=\!\RCL{\rno}{\Wm}{\lm}\!+\!\lm\cdot\cc<\infty\) 
for  a \((\cc,\lm)\in \fcc{\cf}\times \reals{\geq0}^{\lnd}\), and 
\(\lim\nolimits_{\ind \to \infty} \RMI{\rno}{\pma{}{(\ind)}}{\Wm}\!=\!\CRC{\rno}{\Wm\!}{\cc}\)
for a \(\{\pma{}{(\ind)}\}_{\ind\in\integers{+}}\!\subset\!\cset(\cc)\) then
\(\lim\nolimits_{\ind \to \infty} \RMIL{\rno}{\pma{}{(\ind)}}{\Wm}{\lm}\!=\!\RCL{\rno}{\Wm}{\lm}\).
\end{enumerate}
\end{lemma}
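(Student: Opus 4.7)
The proof plan is to take parts (a)--(e) in order, leaning repeatedly on the observation that, for each fixed \(\mP\in\pdis{\inpS}\), the map \(\lm\mapsto\RMIL{\rno}{\mP}{\Wm}{\lm}=\RMI{\rno}{\mP}{\Wm}-\lm\cdot\EXS{\mP}{\cf}\) is affine and (since \(\cf\geq 0\)) coordinatewise decreasing on \(\reals{\geq0}^{\lnd}\). For (a) this gives everything at once: \(\RCL{\rno}{\Wm}{\cdot}\), being a pointwise supremum of affine decreasing functions, is convex, decreasing, and lower semicontinuous on \(\reals{\geq0}^{\lnd}\); and the claimed continuity is the standard fact that a convex function is continuous on the interior of its effective domain. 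The hypothesis ``\(\exists\epsilon>0\) s.t.\ \(\RCL{\rno}{\Wm}{\lm-\epsilon\uc}<\infty\)'' places \(\lm\) in that interior precisely because \(\RCL{\rno}{\Wm}{\cdot}\) is coordinatewise decreasing.

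Part (b) is weak duality: for any \(\mP\in\cset(\cc)\) and \(\lm\geq 0\),
\[\RMI{\rno}{\mP}{\Wm}=\RMIL{\rno}{\mP}{\Wm}{\lm}+\lm\cdot\EXS{\mP}{\cf}\leq \RCL{\rno}{\Wm}{\lm}+\lm\cdot\cc,\]
and the claim follows by taking \(\sup_{\mP}\) then \(\inf_{\lm}\). For (c) at \(\cc\in\inte{\fcc{\cf}}\), I split on finiteness: if \(\CRC{\rno}{\Wm}{\cc}=\infty\) then by Lemma \ref{lem:CCcapacity}-(\ref{CCcapacity:function}) the capacity is identically \(\infty\) on \(\inte{\fcc{\cf}}\), forcing both sides of the identity to equal \(\infty\); if \(\CRC{\rno}{\Wm}{\cc}<\infty\), Lemma \ref{lem:CCcapacity}-(\ref{CCcapacity:interior}) supplies a \(\lm_{\rno,\Wm,\cc}\geq 0\) satisfying \(\CRC{\rno}{\Wm}{\tilde\cc}-\lm_{\rno,\Wm,\cc}\cdot\tilde\cc\leq \CRC{\rno}{\Wm}{\cc}-\lm_{\rno,\Wm,\cc}\cdot\cc\) for every \(\tilde\cc\in\fcc{\cf}\). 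Applying this with \(\tilde\cc=\EXS{\mP}{\cf}\) (which lies in \(\fcc{\cf}\) for any \(\mP\in\pdis{\inpS}\)) and using \(\RMI{\rno}{\mP}{\Wm}\leq \CRC{\rno}{\Wm}{\EXS{\mP}{\cf}}\) gives \(\RMIL{\rno}{\mP}{\Wm}{\lm_{\rno,\Wm,\cc}}\leq \CRC{\rno}{\Wm}{\cc}-\lm_{\rno,\Wm,\cc}\cdot\cc\); taking \(\sup_{\mP}\) closes the duality gap and simultaneously proves (d), because this \(\lm_{\rno,\Wm,\cc}\) then attains the infimum in (c). For the finite-input clause of (c) I would approximate boundary \(\cc\)'s by points in \(\inte{\fcc{\cf}}\) and exploit the polyhedral structure of \(\{\mP:\EXS{\mP}{\cf}\leq \cc\}\) inside the simplex \(\pdis{\inpS}\)---strong duality for polyhedral constraints holds without a Slater-type hypothesis---or, equivalently, verify upper semicontinuity of \(\CRC{\rno}{\Wm}{\cdot}\) on all of \(\fcc{\cf}\) and identify \(\RCL{\rno}{\Wm}{\cdot}\) with its concave conjugate up to sign.

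Finally, (e) is a sandwich. On one side \(\RMIL{\rno}{\pma{}{(\ind)}}{\Wm}{\lm}\leq \RCL{\rno}{\Wm}{\lm}\) by the definition of the A-L capacity; on the other, \(\EXS{\pma{}{(\ind)}}{\cf}\leq \cc\) gives \(\RMIL{\rno}{\pma{}{(\ind)}}{\Wm}{\lm}\geq \RMI{\rno}{\pma{}{(\ind)}}{\Wm}-\lm\cdot\cc\), and taking \(\liminf\) yields the matching lower bound \(\CRC{\rno}{\Wm}{\cc}-\lm\cdot\cc=\RCL{\rno}{\Wm}{\lm}\). The main obstacle I anticipate is the finite-input boundary portion of (c): when \(\cc\notin\inte{\fcc{\cf}}\) the Slater-type regularity built into Lemma \ref{lem:CCcapacity}-(\ref{CCcapacity:interior}) is unavailable, and one must either invoke polyhedral strong duality directly or combine an approximation from the interior with a separate upper-semicontinuity step; all the remaining parts reduce cleanly to Lemma \ref{lem:CCcapacity} together with standard convex-analytic boilerplate.
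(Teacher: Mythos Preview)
The paper does not actually include a proof of Lemma~\ref{lem:Lcapacity}; it is a conference paper and all results in \S\ref{sec:costconstrained} are stated without argument (proofs are deferred to \cite{nakiboglu16C}). So there is no ``paper's own proof'' to compare against here.

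That said, your plan is sound and is the natural convex-analytic route. Part~(\ref{Lcapacity:function}) follows exactly as you say from the fact that \(\RCL{\rno}{\Wm}{\cdot}\) is a pointwise supremum of affine, coordinatewise decreasing functions; the continuity clause is indeed just continuity of a convex function on the interior of its effective domain, once you observe that monotonicity turns ``\(\RCL{\rno}{\Wm}{\lm-\epsilon\uc}<\infty\)'' into finiteness on an \(\ell_\infty\)-ball about \(\lm\). Part~(\ref{Lcapacity:maxmin}) is weak duality, and your derivations of (\ref{Lcapacity:minimax})--(\ref{Lcapacity:interior}) at interior \(\cc\) directly from Lemma~\ref{lem:CCcapacity}-(\ref{CCcapacity:interior}) are clean: the subgradient \(\lm_{\rno,\Wm,\cc}\) supplied there is precisely the attaining multiplier, and the inequality applied at \(\tilde\cc=\EXS{\mP}{\cf}\) closes the gap. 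The sandwich for (\ref{Lcapacity:optimal}) is correct as written.

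The one place that genuinely needs more than boilerplate is the finite-\(\inpS\) boundary case of (\ref{Lcapacity:minimax}), which you correctly flag. Your instinct to invoke polyhedral strong duality is right: with \(\abs{\inpS}<\infty\) the constraint region \(\cset(\cc)\) is a polytope in the simplex and \(\CRC{\rno}{\Wm}{\cdot}\) is finite and concave on the polyhedral set \(\fcc{\cf}\); equivalently, one checks that \(\cc\mapsto\CRC{\rno}{\Wm}{\cc}\) is upper semicontinuous on all of \(\fcc{\cf}\) (not just its interior), which is what makes it equal to its own biconjugate at boundary points. If you go the approximation-from-the-interior route, be explicit that upper semicontinuity is the step doing the work, since Lemma~\ref{lem:CCcapacity}-(\ref{CCcapacity:function}) only asserts continuity on \(\inte{\fcc{\cf}}\).
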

If \(\exists\lm\in\reals{\geq0}\) s.t. \(\RCL{\rno}{\Wm}{\lm}<\infty\) then 
\(\CRC{\rno}{\Wm\!}{\cc}<\infty\) \(\forall\cc\in\fcc{\cf}\) 
by Lemma \ref{lem:Lcapacity}-(\ref{Lcapacity:function}).
However, the converse claim is not true. 
There are cases for which \(\CRC{\rno}{\Wm\!}{\cc}\) is finite for all \(\cc\in\fcc{\cf}\), yet 
\(\RCL{\rno}{\Wm}{\lm}\) is infinite for \(\lm\) small 
enough.\footnote{In \cite[\!\S33\!-\!\S35]{augustin78}, Augustin considers bounded \(\cf\)'s
of the form \(\cf\!:\!\inpS\!\to\![0,1]^{\lnd}\). In that case, it is easy to see that 
if \(\exists\cc\in\inte{\fcc{\cf}}\) s.t. \(\CRC{\rno}{\Wm\!}{\cc}<\infty\) then
\(\sup_{\cc\in \fcc{\cf}}\CRC{\rno}{\Wm\!}{\cc}=\CRC{\rno}{\Wm}{\uc}<\infty\) and
\(\RCL{\rno}{\Wm}{\lm}<\infty\) for all \(\lm\in\reals{\geq0}^{\lnd}\).}
The equality given in (\ref{Lcapacity:minimax}) might not hold 
if \(\cc\in \fcc{\cf}\setminus\inte{\fcc{\cf}}\) and \(\abs{\inpS}=\infty\).

\begin{theorem}\label{thm:Lminimax}
\(\forall\rno\!\in\!(0,1]\), \(\Wm\!:\!\inpS\!\to\!\pmea{\outA}\), \(\cf\!:\!\inpS\!\to\!\reals{\geq0}^{\lnd}\),
\(\lm\!\in\!\reals{\geq0}^{\lnd}\),
\begin{align}
\notag
\RCL{\rno}{\Wm}{\lm}
&=\RRL{\rno}{\Wm}{\lm}.
\end{align}
If \(\RCL{\rno}{\Wm}{\lm}<\infty\)  then \(\exists!\qma{\rno,\Wm}{\lm}\in\pmea{\outA}\),
called the order \(\rno\) A-L center of \(\Wm\) for the Lagrange multiplier \(\lm\),
such that
\begin{align}
\notag
\RCL{\rno}{\Wm}{\lm}
&=\sup\nolimits_{\dinp \in \inpS} \RD{\rno}{\Wm(\dinp)}{\qma{\rno,\Wm}{\lm}}-\lm\cdot\cf(\dinp).
\end{align}
If \(\lim_{\ind \to \infty}\!\RMIL{\rno}{\pma{}{(\ind)}\!}{\!\Wm}{\lm}\!=\!\RCL{\!\rno}{\!\Wm}{\!\lm}<\infty\)
for a \(\{\!\pma{}{(\ind)}\!\}_{\!\ind\in\integers{+}}\!\subset\!\pdis{\!\inpS\!}\)  then  corresponding 
\(\{\qmn{\rno,\pma{}{(\ind)}}\}_{\ind\in\integers{+}}\)is a Cauchy 
sequence  for the total variation metric on \(\pmea{\outA}\) and \(\qma{\rno,\Wm}{\lm}\) is its unique 
limit point.
\end{theorem}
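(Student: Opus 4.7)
The plan is to prove $\RCL{\rno}{\Wm}{\lm}=\RRL{\rno}{\Wm}{\lm}$ by mirroring the strategy used for Theorem \ref{thm:minimax}: establish $\RCL\leq\RRL$ from a max-min inequality and then identify the A-L center as the common total-variation limit of the Augustin means $\qmn{\rno,\pma{}{(\ind)}}$ along any sequence $\{\pma{}{(\ind)}\}$ satisfying $\lim_\ind\RMIL{\rno}{\pma{}{(\ind)}}{\Wm}{\lm}=\RCL{\rno}{\Wm}{\lm}$. The easy direction starts from $\CRD{\rno}{\Wm}{\mQ}{\mP}\geq\RMI{\rno}{\mP}{\Wm}$, valid by definition of Augustin information. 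Subtracting $\lm\cdot\EXS{\mP}{\cf}$ and taking $\sup_{\mP\in\pdis{\inpS}}$, the fact that the dependence on $\mP$ is affine forces the sup to be attained at a vertex $\delta_\dinp$, so $\sup_\dinp[\RD{\rno}{\Wm(\dinp)}{\mQ}-\lm\cdot\cf(\dinp)]\geq\RCL{\rno}{\Wm}{\lm}$ for every $\mQ$; taking $\inf_\mQ$ gives $\RRL\geq\RCL$.

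For the reverse direction assume $\RCL{\rno}{\Wm}{\lm}<\infty$, fix any maximizing sequence $\{\pma{}{(\ind)}\}\subset\pdis{\inpS}$, and let $\epsilon_\ind\DEF\RCL{\rno}{\Wm}{\lm}-\RMIL{\rno}{\pma{}{(\ind)}}{\Wm}{\lm}\to 0$. Applying the Erven-Harremoes inequality (\ref{eq:lem:informationEHB}) twice, once with $(\mP,\mQ)=(\pma{}{(\ind)},\qmn{\rno,\mean_{\ind,\jnd}})$ and once with $(\mP,\mQ)=(\pma{}{(\jnd)},\qmn{\rno,\mean_{\ind,\jnd}})$ where $\mean_{\ind,\jnd}\DEF\tfrac{1}{2}(\pma{}{(\ind)}+\pma{}{(\jnd)})$, averaging the two inequalities, and noting that $\RMI{\rno}{\mean_{\ind,\jnd}}{\Wm}-\lm\cdot\EXS{\mean_{\ind,\jnd}}{\cf}=\RMIL{\rno}{\mean_{\ind,\jnd}}{\Wm}{\lm}\leq\RCL{\rno}{\Wm}{\lm}$, delivers
\[
\tfrac{1}{2}\RD{\rno}{\qmn{\rno,\pma{}{(\ind)}}}{\qmn{\rno,\mean_{\ind,\jnd}}}+\tfrac{1}{2}\RD{\rno}{\qmn{\rno,\pma{}{(\jnd)}}}{\qmn{\rno,\mean_{\ind,\jnd}}}\leq\RCL{\rno}{\Wm}{\lm}-\tfrac{1}{2}[\RMIL{\rno}{\pma{}{(\ind)}}{\Wm}{\lm}+\RMIL{\rno}{\pma{}{(\jnd)}}{\Wm}{\lm}].
\]
The right-hand side tends to zero, so Pinsker's inequality together with the triangle inequality for total variation shows $\{\qmn{\rno,\pma{}{(\ind)}}\}$ is Cauchy in total variation; let $\qma{\rno,\Wm}{\lm}$ denote its limit.

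To verify that $\qma{\rno,\Wm}{\lm}$ attains $\RRL$, fix $\dinp\in\inpS$ and perturb by $\pma{}{(\ind,\dinp)}\DEF(1-\sqrt{\epsilon_\ind})\pma{}{(\ind)}+\sqrt{\epsilon_\ind}\,\delta_\dinp$. Invoking the Erven-Harremoes inequality with $\mP=\pma{}{(\ind)}$ and $\mQ=\qmn{\rno,\pma{}{(\ind,\dinp)}}$, using $\RMI{\rno}{\pma{}{(\ind,\dinp)}}{\Wm}=\CRD{\rno}{\Wm}{\qmn{\rno,\pma{}{(\ind,\dinp)}}}{\pma{}{(\ind,\dinp)}}$ from Lemma \ref{lem:information}-(\ref{information:orderzeroone}), decomposing this expression along the two components of $\pma{}{(\ind,\dinp)}$, discarding a nonnegative divergence term, and dividing by $\sqrt{\epsilon_\ind}$ produces
\[
\RD{\rno}{\Wm(\dinp)}{\qmn{\rno,\pma{}{(\ind,\dinp)}}}-\lm\cdot\cf(\dinp)\leq\sqrt{\epsilon_\ind}+\RMIL{\rno}{\pma{}{(\ind)}}{\Wm}{\lm}.
\]
Concavity of $\RMI{\rno}{\cdot}{\Wm}$ together with $\RMI{\rno}{\delta_\dinp}{\Wm}=0$ shows that $\{\pma{}{(\ind,\dinp)}\}$ is itself a maximizing sequence, so the Cauchy argument applied to the interleaving of $\{\pma{}{(\ind)}\}$ and $\{\pma{}{(\ind,\dinp)}\}$ forces $\qmn{\rno,\pma{}{(\ind,\dinp)}}\to\qma{\rno,\Wm}{\lm}$ in total variation. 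Lower semicontinuity of $\RD{\rno}{\Wm(\dinp)}{\cdot}$ in total variation then yields $\RD{\rno}{\Wm(\dinp)}{\qma{\rno,\Wm}{\lm}}-\lm\cdot\cf(\dinp)\leq\RCL{\rno}{\Wm}{\lm}$ for each $\dinp$, and taking $\sup_\dinp$ closes the loop. Uniqueness of $\qma{\rno,\Wm}{\lm}$, and the Cauchy assertion for an arbitrary maximizing sequence, follow by applying the midpoint argument to the interleaving of any two maximizing sequences.

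The main obstacle is calibrating the perturbation scale: $\sqrt{\epsilon_\ind}$ must shrink slowly enough that the dividing overhead $\epsilon_\ind/\sqrt{\epsilon_\ind}$ still tends to zero, yet fast enough that $\{\pma{}{(\ind,\dinp)}\}$ remains maximizing, so that its Augustin means converge to the same limit $\qma{\rno,\Wm}{\lm}$. The geometric-mean choice balances these competing requirements and plays the role of the scaling used in the proof of Theorem \ref{thm:minimax} for the unconstrained minimax.
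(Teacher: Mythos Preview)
The proposal is correct and follows essentially the route the paper indicates: no explicit proof is given here, but the paper states that the arguments in \S\ref{sec:capacity} are ``very similar to the proofs of the corresponding claims in \cite{nakiboglu16A}'' with Lemma~\ref{lem:information} replacing the corresponding R\'enyi lemma, and the A-L version is obtained by the same template. Your use of the Erven--Harremo\"es bound (\ref{eq:lem:informationEHB}) to obtain the Cauchy property of \(\{\qmn{\rno,\pma{}{(\ind)}}\}\), followed by a \(\sqrt{\epsilon_{\ind}}\)-perturbation and lower semicontinuity of \(\RD{\rno}{\Wm(\dinp)}{\cdot}\) to identify the limit as the A-L center, is precisely that template carried over to the Lagrangian setting.
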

\begin{lemma}\label{lem:Lcenter} 
If \(\rno\!\in\!(0,1]\), \(\Wm\!:\!\inpS\!\to\!\pmea{\outA}\), \(\cf\!:\!\inpS\!\to\!\reals{\geq0}^{\lnd}\),
\(\cc\!\in\!\fcc{\cf}\) s.t. \(\CRC{\rno}{\Wm\!}{\cc}<\infty\) and \(\lm\!\in\!\reals{\geq0}^{\lnd}\) s.t.
\(\CRC{\rno}{\Wm\!}{\cc}=\RCL{\rno}{\Wm}{\lm}+\lm\cdot\cc\) then \(\qmn{\rno,\Wm\!,\cc}=\qma{\rno,\Wm}{\lm}\).
\end{lemma}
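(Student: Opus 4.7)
The plan is to produce a single approximating sequence of priors from which both centers can be recovered as limits of the Augustin means $\qmn{\rno,\pma{}{(\ind)}}$, and then conclude by uniqueness of total-variation limits. Recall that Theorem \ref{thm:minimax} characterizes $\qmn{\rno,\Wm,\cc}$ as the total-variation limit of $\qmn{\rno,\pma{}{(\ind)}}$ for any sequence $\{\pma{}{(\ind)}\}\subset\cset(\cc)$ satisfying $\RMI{\rno}{\pma{}{(\ind)}}{\Wm}\to\CRC{\rno}{\Wm\!}{\cc}$, while Theorem \ref{thm:Lminimax} characterizes $\qma{\rno,\Wm}{\lm}$ as the total-variation limit of $\qmn{\rno,\pma{}{(\ind)}}$ for any sequence $\{\pma{}{(\ind)}\}\subset\pdis{\inpS}$ satisfying $\RMIL{\rno}{\pma{}{(\ind)}}{\Wm}{\lm}\to\RCL{\rno}{\Wm}{\lm}$. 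Thus it suffices to exhibit one sequence meeting both convergence conditions simultaneously.

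First, I would invoke the definition of $\CRC{\rno}{\Wm\!}{\cc}$ as a supremum over $\cset(\cc)$ to pick some $\{\pma{}{(\ind)}\}_{\ind\in\integers{+}}\subset\cset(\cc)$ with $\RMI{\rno}{\pma{}{(\ind)}}{\Wm}\to\CRC{\rno}{\Wm\!}{\cc}$. The hypotheses $\cc\in\fcc{\cf}$, $\CRC{\rno}{\Wm\!}{\cc}<\infty$, and $\CRC{\rno}{\Wm\!}{\cc}=\RCL{\rno}{\Wm}{\lm}+\lm\cdot\cc$ place us in the exact setting of Lemma \ref{lem:Lcapacity}-(\ref{Lcapacity:optimal}), which immediately upgrades the chosen sequence to an A-L approximating sequence, i.e.\ $\RMIL{\rno}{\pma{}{(\ind)}}{\Wm}{\lm}\to\RCL{\rno}{\Wm}{\lm}$. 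In particular, $\RCL{\rno}{\Wm}{\lm}<\infty$ (as $\lm\cdot\cc\geq 0$), so Theorem \ref{thm:Lminimax} does apply to this sequence.

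Finally, applying Theorem \ref{thm:minimax} to $\{\pma{}{(\ind)}\}$ yields $\qmn{\rno,\pma{}{(\ind)}}\to\qmn{\rno,\Wm\!,\cc}$ in total variation, while applying Theorem \ref{thm:Lminimax} to the \emph{same} sequence yields $\qmn{\rno,\pma{}{(\ind)}}\to\qma{\rno,\Wm}{\lm}$ in total variation. Since a convergent sequence in $\pmea{\outA}$ has a unique total-variation limit, the desired identity $\qmn{\rno,\Wm\!,\cc}=\qma{\rno,\Wm}{\lm}$ follows. The only nontrivial step is the transfer from primal near-optimality of $\pma{}{(\ind)}$ to Lagrangian near-optimality, but this is precisely what Lemma \ref{lem:Lcapacity}-(\ref{Lcapacity:optimal}) supplies, via the vanishing of the complementary slackness term $\lm\cdot(\cc-\EXS{\pma{}{(\ind)}}{\cf})$ along the sequence; consequently, no additional work is required beyond assembling the already-proved results.
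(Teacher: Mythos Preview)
Your proposal is correct and is precisely the argument the paper sets up: Lemma \ref{lem:Lcapacity}-(\ref{Lcapacity:optimal}) is stated exactly so that an approximating sequence for \(\CRC{\rno}{\Wm\!}{\cc}\) in \(\cset(\cc)\) becomes an approximating sequence for \(\RCL{\rno}{\Wm}{\lm}\), after which the Cauchy-sequence characterizations of the centers in Theorems \ref{thm:minimax} and \ref{thm:Lminimax} force \(\qmn{\rno,\Wm,\cc}=\qma{\rno,\Wm}{\lm}\). The paper does not include a separate proof of Lemma \ref{lem:Lcenter} in the text, but the placement and content of Lemma \ref{lem:Lcapacity}-(\ref{Lcapacity:optimal}) make clear this is the intended derivation.
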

\begin{lemma}\label{lem:Lcapacityproduct}
\(\forall \rno\!\in\!(0,\!1]\), product channel \(\Wm\) for finite index set \(\tinS\),
and \(\cf\) satisfying 
\(\cf(\dinp)\!=\!\sum_{\tin\in\tinS}\!\cf_{\tin}\!(\dinp_{\tin})\) for some 
\(\cf_{\tin}\!:\!\inpS_{\tin}\!\to\!\reals{\geq0}^{\lnd}\), 
\begin{align}
\notag
\RCL{\rno}{\Wm}{\lm}
&=\sum\nolimits_{\tin\in \tinS} \RCL{\rno}{\Wmn{\tin}}{\lm}
&
&\forall \lm \in \reals{\geq0}^{\lnd}. 
\end{align}
If \(\RCL{\rno}{\Wm}{\lm}<\infty\) then \(\qma{\rno,\Wm}{\lm}=\oprod{\tin\in \tinS} \qma{\rno,\Wmn{\tin}}{\lm}\).
\end{lemma}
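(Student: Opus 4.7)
The plan is to prove $\RCL{\rno}{\Wm}{\lm}=\sum_{\tin\in\tinS}\RCL{\rno}{\Wmn{\tin}}{\lm}$ by establishing both inequalities, and then to extract the tensorization of the A-L center from the uniqueness assertion of Theorem \ref{thm:Lminimax}. For the $\geq$ direction, I would restrict the supremum defining $\RCL{\rno}{\Wm}{\lm}$ to product priors $\mP=\oprod{\tin\in\tinS}\pmn{\tin}$ with $\pmn{\tin}\in\pdis{\inpS_{\tin}}$. Lemma \ref{lem:information}(c) then gives $\RMI{\rno}{\mP}{\Wm}=\sum_{\tin\in\tinS}\RMI{\rno}{\pmn{\tin}}{\Wmn{\tin}}$, and additivity of $\cf$ combined with the product structure of $\mP$ yields $\lm\cdot\EXS{\mP}{\cf}=\sum_{\tin\in\tinS}\lm\cdot\EXS{\pmn{\tin}}{\cf_{\tin}}$. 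Subtracting, $\RMIL{\rno}{\mP}{\Wm}{\lm}=\sum_{\tin\in\tinS}\RMIL{\rno}{\pmn{\tin}}{\Wmn{\tin}}{\lm}$; since the $\pmn{\tin}$ are chosen independently, taking suprema termwise produces the lower bound.

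For the $\leq$ direction, I would invoke $\RCL{\rno}{\Wm}{\lm}=\RRL{\rno}{\Wm}{\lm}$ from Theorem \ref{thm:Lminimax} and test the infimum defining $\RRL{\rno}{\Wm}{\lm}$ at the product measure $\mQ=\oprod{\tin\in\tinS}\qma{\rno,\Wmn{\tin}}{\lm}$. This measure is well defined whenever $\RCL{\rno}{\Wmn{\tin}}{\lm}<\infty$ for every $\tin$, which is the only case in which the bound is nontrivial since otherwise the right hand side is infinite. Additivity of Renyi divergence for product measures yields $\RD{\rno}{\Wm(\dinp)}{\mQ}=\sum_{\tin\in\tinS}\RD{\rno}{\Wmn{\tin}(\dinp_{\tin})}{\qma{\rno,\Wmn{\tin}}{\lm}}$, and additivity of $\cf$ yields $\lm\cdot\cf(\dinp)=\sum_{\tin\in\tinS}\lm\cdot\cf_{\tin}(\dinp_{\tin})$. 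Since $\inpS=\oprod{\tin\in\tinS}\inpS_{\tin}$, the supremum over $\dinp$ of the resulting coordinatewise sum factors; Theorem \ref{thm:Lminimax} applied componentwise then identifies each inner supremum as $\RRL{\rno}{\Wmn{\tin}}{\lm}=\RCL{\rno}{\Wmn{\tin}}{\lm}$, giving the upper bound.

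For the tensorization of the center, when $\RCL{\rno}{\Wm}{\lm}<\infty$ the computation above exhibits $\oprod{\tin\in\tinS}\qma{\rno,\Wmn{\tin}}{\lm}$ as a minimizer of the infimum defining $\RRL{\rno}{\Wm}{\lm}$, so the uniqueness clause of Theorem \ref{thm:Lminimax} forces this product measure to coincide with $\qma{\rno,\Wm}{\lm}$. I anticipate no serious obstacle; the only mildly delicate step is the coordinatewise factoring of the supremum, which is automatic from $\inpS=\oprod{\tin\in\tinS}\inpS_{\tin}$ and the fact that each summand depends on a single coordinate only.
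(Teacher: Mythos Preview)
Your argument is correct. The paper does not include a proof of Lemma~\ref{lem:Lcapacityproduct}; as with the other results in \S\ref{sec:center}--\S\ref{sec:costconstrained}, the proof is deferred to the companion manuscript \cite{nakiboglu16C}. Your approach---product priors plus Lemma~\ref{lem:information}(\ref{information:product}) for the lower bound, the radius formulation $\RCL{\rno}{\Wm}{\lm}=\RRL{\rno}{\Wm}{\lm}$ from Theorem~\ref{thm:Lminimax} tested at $\oprod{\tin}\qma{\rno,\Wmn{\tin}}{\lm}$ for the upper bound, and the uniqueness clause of Theorem~\ref{thm:Lminimax} for the center---is the natural one and mirrors exactly how the analogous unconstrained result (Lemma~\ref{lem:capacityproduct}) is handled in \cite{nakiboglu16A}. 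The only point worth making explicit in a write-up is the case split you already note: if some $\RCL{\rno}{\Wmn{\tin}}{\lm}=\infty$ the upper bound is vacuous, and otherwise every $\qma{\rno,\Wmn{\tin}}{\lm}$ exists by Theorem~\ref{thm:Lminimax} so the product test measure is available.
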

Recall that the product structure assertion for \(\qmn{\rno,\!\Wm\!,\cc}\) in Lemma \ref{lem:CCcapacityproduct},
was qualified by the existence of a \(\{\cc_{\tin}\}_{\tin\in \tinS}\)  satisfying
\(\sum\nolimits_{\tin\in \tinS} \CRC{\rno}{\Wmn{\tin}}{\cc_{\tin}}=\CRC{\rno}{\Wm\!}{\cc}<\infty\).
In Lemma \ref{lem:Lcapacityproduct}, on the other hand, the product structure assertion for \(\qma{\rno,\Wm}{\lm}\) 
is qualified only by \(\RCL{\rno}{\Wm}{\lm}<\infty\). 

\subsection{The R-G Information, Mean, Capacity, and Center}\label{sec:renyigallager}
For any \(\!\rno\!\in\!\reals{+}\!\!\setminus\!\{\!1\!\}\), \(\Wm\!:\!\inpS\!\to\!\pmea{\outA}\), 
cost function \(\cf\!:\!\inpS\!\to\!\reals{\geq0}^{\lnd}\),
\(\lm\!\in\!\reals{\geq0}^{\lnd}\), and \(\mP\!\in\!\pdis{\inpS}\)
\emph{the order \(\rno\) Renyi-Gallager (R-G) information for prior \(\mP\) and 
Lagrange multiplier  \(\lm\)} is
\begin{align}
\notag
\GMIL{\rno}{\mP}{\Wm}{\lm}
&\DEF \inf\nolimits_{\mQ\in \pmea{\outA}} \RD{\rno}{\mP \circ \Wm e^{\frac{1-\rno}{\rno}\lm\cdot\cf}}{\mP\otimes \mQ}.
\end{align}
\emph{The order \(\rno\) R-G capacity for Lagrange multiplier \(\lm\)} is
\begin{align}
\notag
\GCL{\rno}{\Wm}{\lm}
&\DEF \sup\nolimits_{\mP\in \pdis{\inpS}} \GMIL{\rno}{\mP}{\Wm}{\lm}.
\end{align}
Using the definition of \(\GMIL{\rno}{\mP}{\Wm}{\lm}\) and \(\GCL{\rno}{\Wm}{\lm}\) we get
\begin{align}
\notag
\GCL{\rno}{\Wm}{\lm}
&=\sup\nolimits_{\mP \in \pdis{\inpS}}\inf\nolimits_{\mQ\in \pmea{\outA}} 
\RD{\rno}{\mP \circ \Wm e^{\frac{1-\rno}{\rno}\lm\cdot\cf}}{\mP\otimes \mQ}.
\end{align}

Using the concavity of log function and Jensen's inequality one can show that 
\(\RMIL{\rno}{\mP}{\Wm}{\lm}\geq \GMIL{\rno}{\mP}{\Wm}{\lm}\) for
\(\rno\in(0,1)\) and 
\(\RMIL{\rno}{\mP}{\Wm}{\lm}\leq \GMIL{\rno}{\mP}{\Wm}{\lm}\) for
\(\rno\in(1,\infty)\).
On the other hand, one can show by substitution that  
\(\forall\mQ\in\pmea{\outA}\) and \(\rno\!\in\!\reals{+}\!\setminus\{\!1\!\}\),
\begin{align}
\notag
\GMIL{\rno}{\mP}{\Wm}{\lm}
&=\RD{\rno}{\mP \circ \Wm e^{\frac{1-\rno}{\rno}\lm\cdot\cf}}{\mP\otimes \qga{\rno,\Wm}{\lm}}
\\
\notag
\RD{\rno}{\mP \circ \Wm e^{\frac{1-\rno}{\rno}\lm\cdot\cf}}{\mP\otimes \mQ}
&=\GMIL{\rno}{\mP}{\Wm}{\lm}
+\RD{\rno}{\qga{\rno,\mP}{\lm}}{\mQ}
\end{align}
where \(\qga{\rno,\mP}{\lm}\) is \emph{the R-G mean} given in terms of \(\mma{\rno,\mP}{\lm}\) as follows,
\begin{align}
\notag
\qga{\rno,\mP}{\lm}
&\DEF \tfrac{\mma{\rno,\mP}{\lm}}{\lon{\mma{\rno,\mP}{\lm}}}
&
\der{\mma{\rno,\mP}{\lm}}{\rfm}
&\DEF\left[\!\sum\nolimits_{\dinp}\!\!\mP(\dinp)e^{(1-\rno)\lm \cdot \cf(\dinp)}\!\!\left(\!\der{\Wm(\dinp)}{\rfm}\!\right)^{\!\!\rno}\right]^{\!\frac{1}{\rno}}\!\!.
\end{align}
For \(\lm=0\uc\), R-G information and mean are equal to the corresponding 
Renyi information and mean analyzed in \cite{nakiboglu16A}. 
Following a similar analysis one can show that a minimax theorem similar 
to \cite[Thm \ref*{A-thm:minimax}]{nakiboglu16A} holds for R-G quantities:
\begin{align}
\notag
\GCL{\rno}{\Wm}{\lm}
&=\inf\nolimits_{\mQ\in \pmea{\outA}}\sup\nolimits_{\mP\in\pdis{\inpS}}  \RD{\rno}{\mP \circ \Wm e^{\frac{1-\rno}{\rno}\lm\cdot\cf}}{\mP\otimes \mQ}
\\
\notag
&=\inf\nolimits_{\mQ\in\pmea{\outA}}\!\sup\nolimits_{\dinp\in \inpS}\!\RD{\rno}{\Wm(\dinp)}{\mQ}\!-\!\lm\cdot\cf(\dinp).
\end{align}
Then \(\GCL{\rno}{\Wm}{\lm}=\RCL{\rno}{\Wm}{\lm}\)~ \(\forall\lm\in\reals{\geq0}^{\lnd},\rno\in(0,1)\)
by Theorem \ref{thm:Lminimax}.
\section{Sphere Packing Bounds}\label{sec:outerbound}
\begin{lemma}\label{lem:HT}
For any 
\(\mW\!=\!\wmn{1}\!\otimes\!\cdots\!\otimes\!\wmn{\blx}\),
\(\mQ\!=\!\qmn{1}\!\otimes\!\cdots\!\otimes\!\qmn{\blx}\),
\(\knd\!\geq\!3\), \(\rno\in(0,1)\), if  
\(\mQ(\oev)\leq (\sfrac{1}{\sqrt{16\blx}}) e^{-\RD{1}{\vma{\rno}{\mW,\mQ}}{\mQ}-\rno 3\cm{\knd}}\)
for \(\oev\in\outA\) and
\(\cm{\knd}
\DEF\left(\sum\nolimits_{\tin=1}^{\blx} \EXS{\vma{\rno}{\mW,\mQ}}{\abs{\ln \der{\wmn{\tin\sim}}{\qmn{\tin}}
\!-\!\EXS{\vma{\rno}{\mW,\mQ}}{\ln \der{\wmn{\tin\sim}}{\qmn{\tin}}}}^{\knd}}\right)^{\frac{1}{\knd}}\)
then
\(\mW(\outS\setminus\oev)\geq  (\sfrac{1}{\sqrt{16\blx}}) e^{-\RD{1}{\vma{\rno}{\mW,\mQ}}{\mW}-(1-\rno)3\cm{\knd}}.\)
\end{lemma}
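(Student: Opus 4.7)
The natural pivot is the tilted measure $\mP := \vma{\rno}{\mW,\mQ}$. Since $\mW$ and $\mQ$ are products, so is $\mP$: $\mP = \vma{\rno}{\wmn{1},\qmn{1}}\otimes\cdots\otimes\vma{\rno}{\wmn{\blx},\qmn{\blx}}$. Hence $X := \ln \der{\mW}{\mQ} = \sum_{\tin=1}^{\blx}X_{\tin}$ with $X_{\tin} := \ln \der{\wmn{\tin}}{\qmn{\tin}}$ is a sum of $\mP$-independent random variables. A direct calculation from the definition of the tilted measure yields the key identities
\begin{align*}
\ln \der{\mP}{\mQ}-\RD{1}{\mP}{\mQ} & = \rno\,(X-\EXS{\mP}{X}), \\
\ln \der{\mP}{\mW}-\RD{1}{\mP}{\mW} & = -(1-\rno)(X-\EXS{\mP}{X}),
\end{align*}
so the fluctuation of the scalar $X$ around $\EXS{\mP}{X}$ simultaneously controls both log-density ratios.

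Define the typical set $\set{G}:=\{|X-\EXS{\mP}{X}|\leq 3\cm{\knd}\}$. On $\set{G}$, $\der{\mP}{\mQ}\leq e^{\RD{1}{\mP}{\mQ}+\rno\, 3\cm{\knd}}$ and $\der{\mW}{\mP}\geq e^{-\RD{1}{\mP}{\mW}-(1-\rno)3\cm{\knd}}$. A first change of measure from $\mP$ to $\mQ$ on $\set{G}\cap\oev$, combined with the hypothesis on $\mQ(\oev)$, gives $\mP(\set{G}\cap\oev)\leq e^{\RD{1}{\mP}{\mQ}+\rno 3\cm{\knd}}\mQ(\oev)\leq 1/\sqrt{16\blx}$, so that $\mP(\set{G}\setminus\oev)\geq \mP(\set{G})-1/\sqrt{16\blx}$. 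Granted the concentration bound $\mP(\set{G})\geq 2/\sqrt{16\blx}$, a second change of measure from $\mP$ to $\mW$ on $\set{G}\setminus\oev$ closes the argument:
\[
\mW(\outS\setminus\oev)\geq e^{-\RD{1}{\mP}{\mW}-(1-\rno)3\cm{\knd}}\mP(\set{G}\setminus\oev)\geq \tfrac{1}{\sqrt{16\blx}}e^{-\RD{1}{\mP}{\mW}-(1-\rno)3\cm{\knd}}.
\]

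The main technical obstacle is thus establishing $\mP(\set{G})\geq 2/\sqrt{16\blx}$. For $\knd=2$ this would be immediate from Chebyshev since $\VXS{\mP}{X}=\cm{2}^2$. For $\knd\geq 3$, however, only the $\knd$-th central moment $\sum_{\tin}\EXS{\mP}{|X_{\tin}-\EXS{\mP}{X_{\tin}}|^\knd}=\cm{\knd}^\knd$ is directly available, and Chebyshev via the Holder bound $\VXS{\mP}{X}\leq \blx^{1-2/\knd}\cm{\knd}^2$ is too weak as $\blx$ grows. I would therefore invoke a Berry-Esseen / local-CLT estimate on the sum $X-\EXS{\mP}{X}$ of independent zero-mean RVs: the Gaussian approximation to the CDF of $(X-\EXS{\mP}{X})/\sqrt{\VXS{\mP}{X}}$, with Berry-Esseen remainder of order $\cm{\knd}^{\knd}/(\VXS{\mP}{X})^{\knd/2}$, places $\mP$-mass of order $\blx^{1/\knd-1/2}\geq \blx^{-1/2}$ on the window $\set{G}$, which suffices for the claim; the explicit constant $1/\sqrt{16}$ would emerge from the $1/\sqrt{2\pi}$ factor in the Gaussian density at the origin combined with the Berry-Esseen constant.
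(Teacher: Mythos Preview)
Your approach is essentially the paper's own: the paper does not spell out a proof of this lemma but states that it ``is in the spirit of \cite[Thm~5]{shannonGB67A}, but instead of Chebyshev ineq, it relies on Berry--Esseen Thm via \cite[Lem~\ref*{B-lem:berryesseenN}]{nakiboglu16B}.'' Your tilted-measure pivot, the identities for \(\ln\der{\mP}{\mQ}\) and \(\ln\der{\mP}{\mW}\), the typical set \(\set{G}\), and the two changes of measure are exactly the Shannon--Gallager--Berlekamp scheme, and your replacement of Chebyshev by a Berry--Esseen estimate for the concentration step \(\mP(\set{G})\geq 2/\sqrt{16\blx}\) is precisely what the paper invokes.

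The only soft spot is your last paragraph: the stated Berry--Esseen remainder ``of order \(\cm{\knd}^{\knd}/(\VXS{\mP}{X})^{\knd/2}\)'' is not the standard form, and the claim that the constant \(1/\sqrt{16}\) falls out of \(1/\sqrt{2\pi}\) plus the Berry--Esseen constant is asserted rather than verified. In the paper this is outsourced to a tailored lemma (\cite[Lem~\ref*{B-lem:berryesseenN}]{nakiboglu16B}) that packages the third-moment Berry--Esseen bound together with the Gaussian mass on a window of half-width \(3\cm{\knd}\). Your heuristic is correct in order (the Gaussian mass on \(\set{G}\) is \(\gtrsim \blx^{1/\knd-1/2}\) while the Berry--Esseen error is \(O(\blx^{-1/2})\), and the degenerate small-variance case is covered directly by Chebyshev), but turning this into the exact constant \(2/\sqrt{16\blx}\) requires the careful case analysis that the referenced lemma carries out.
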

Lemma \!\ref{lem:HT} is in the spirit of \cite[\!Thm \!\!5]{shannonGB67A}, but instead of 
Chebyshev ineq, it relies on Berry-Essen Thm via \cite[\!Lem \!\!\ref*{B-lem:berryesseenN}]{nakiboglu16B}.

Our sphere packing bounds are expressed in terms of the averaged Augustin capacity 
and\footnote{Note \(\CRCI{\rno}{\Wm}{\cc}{\epsilon}=\CRCI{\rno}{\Wm}{\cset(\cc)}{\epsilon}\) and 
\(\spa{\epsilon}{\rate,\Wm,\cc}=\spa{\epsilon}{\rate,\Wm,\cset(\cc)}\).}
averaged sphere packing exponent: for all \(\epsilon\in(0,1)\) and  \(\rate\in\reals{+}\):
\begin{align}
\notag
\CRCI{\rno}{\Wm}{\cset}{\epsilon}
&\DEF \tfrac{1}{\epsilon}\int_{\rno-\epsilon\rno}^{\rno+\epsilon(1-\rno)} 
\left[ 1\vee \left(\tfrac{\rno}{1-\rno}\tfrac{1-\rnf}{\rnf}\right)\right] \CRC{\rnf}{\Wm}{\cset}  \dif{\rnf}
\\
\notag
\spa{\epsilon}{\rate,\Wm,\cset}
&\DEF \sup\nolimits_{\rno\in (0,1)} \tfrac{1-\rno}{\rno} \left(\CRCI{\rno}{\Wm}{\cset}{\epsilon}-\rate\right).
\end{align}
\begin{lemma}\label{lem:avspherepacking}
For any \(\rno\!\in\!(0,1]\), \(\Wm\!:\!\inpS\!\to\!\pmea{\outA}\), \(\cset\subset\pdis{\inpS}\) s.t.
\(\CRC{\sfrac{1}{2}}{\Wm}{\cset}\in \reals{+}\), \(\rnf\in (0,1)\),
\(\rate\in [\CRC{\rnf}{\Wm}{\cset},\CRC{1}{\Wm}{\cset})\) and  \(\epsilon\in(0,\rnf)\).
Then  \(0\leq \spa{\epsilon}{\rate,\Wm,\cset}-\spe{\rate,\Wm,\cset}
\leq \tfrac{\epsilon}{\rnf-\epsilon} \tfrac{\rate}{\rnf}\).
\end{lemma}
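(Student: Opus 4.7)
The plan is to use the shorthand \(h(x):=\tfrac{1-x}{x}\) and \(g(\beta):=h(\beta)(\CRC{\beta}{\Wm}{\cset}-\rate)\), so that \(\spe{\rate,\Wm,\cset}=\sup_{\rno\in(0,1)}g(\rno)\). Note that the weight inside \(\CRCI{\rno}{\Wm}{\cset}{\epsilon}\) equals \(h(\beta)/h(\rno)\) on \([\rno(1-\epsilon),\rno]\) and equals \(1\) on \([\rno,\rno+\epsilon(1-\rno)]\). The lower bound \(\spa{\epsilon}{\rate,\Wm,\cset}\geq\spe{\rate,\Wm,\cset}\) would follow from the pointwise inequality \(\CRCI{\rno}{\Wm}{\cset}{\epsilon}\geq\CRC{\rno}{\Wm}{\cset}\) for every \(\rno\in(0,1)\): on the upper half Lemma \ref{lem:capacityO}-(\ref{capacityO-increasing}) gives \(\CRC{\beta}{\Wm}{\cset}\geq\CRC{\rno}{\Wm}{\cset}\), and on the lower half the decreasing property in Lemma \ref{lem:capacityO}-(\ref{capacityO-decreasing}) gives \(\tfrac{h(\beta)}{h(\rno)}\CRC{\beta}{\Wm}{\cset}\geq\CRC{\rno}{\Wm}{\cset}\); averaging preserves the bound and taking the supremum over \(\rno\) delivers the claim.

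For the upper bound, I would fix \(\rno\in(0,1)\), split the integration interval at \(\rno\), and add/subtract \(h(\rno)\rate\) so as to decompose
\[
\tfrac{1-\rno}{\rno}(\CRCI{\rno}{\Wm}{\cset}{\epsilon}-\rate)
=\tfrac{1}{\epsilon}\!\int_{\rno(1-\epsilon)}^{\rno}\!g(\beta)d\beta
+\tfrac{\rate}{\epsilon}\!\int_{\rno(1-\epsilon)}^{\rno}\![h(\beta)-h(\rno)]d\beta
+\tfrac{h(\rno)}{\epsilon}\!\int_{\rno}^{\rno+\epsilon(1-\rno)}\![\CRC{\beta}{\Wm}{\cset}-\rate]d\beta.
\]
The first integral is at most \(\rno\,\spe{\rate,\Wm,\cset}\) because \(g(\beta)\leq\spe\); the second evaluates exactly to \(\rate\delta\) with \(\delta:=-\tfrac{\ln(1-\epsilon)}{\epsilon}-1\in[0,\tfrac{\epsilon}{2(1-\epsilon)}]\); for the third I would apply the universal inequality \(h(\rno)(\CRC{\beta}{\Wm}{\cset}-\rate)\leq\tfrac{h(\rno)}{h(\beta)}\spe{\rate,\Wm,\cset}\) (valid for either sign of the bracket because \(\spe\geq 0\)) and then integrate \(h(\rno)/h(\beta)\) explicitly to obtain the bound \(\spe{\rate,\Wm,\cset}[(1-\rno)+h(\rno)\delta]\). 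Summing yields
\(
\tfrac{1-\rno}{\rno}(\CRCI{\rno}{\Wm}{\cset}{\epsilon}-\rate)\leq\spe{\rate,\Wm,\cset}+\delta[\rate+h(\rno)\spe{\rate,\Wm,\cset}].
\)

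To close out I would use the hypothesis \(\rate\geq\CRC{\rnf}{\Wm}{\cset}\) and Lemma \ref{lem:capacityO}-(\ref{capacityO-decreasing}) to deduce the global bound \(\spe{\rate,\Wm,\cset}\leq h(\rnf)\rate\), and then split on \(\rno\). If \(\rno\leq\tfrac{\rnf-\epsilon}{1-\epsilon}\), the entire averaging window lies in \((0,\rnf]\) and so the first and third integrals are nonpositive; the decomposition collapses to \(\rate\delta\leq\tfrac{\epsilon\rate}{\rnf(\rnf-\epsilon)}\) after noting \(\rnf(\rnf-\epsilon)<1-\epsilon\). Otherwise \(h(\rno)<\tfrac{1-\rnf}{\rnf-\epsilon}\), and substituting this together with \(\spe\leq h(\rnf)\rate\) and \(\delta\leq\tfrac{\epsilon}{2(1-\epsilon)}\) reduces the target to verifying \(2\rnf^{2}-(2+\epsilon)\rnf+(2\epsilon-1)\leq 0\) on \(\rnf\in[\epsilon,1]\); this upward-opening quadratic vanishes at both endpoints, so the estimate holds throughout. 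Taking the supremum over \(\rno\) completes the proof. The main obstacle is precisely this case split: a single uniform estimate on \(\delta[\rate+h(\rno)\spe]\) diverges as \(\rno\downarrow 0\), so one must exploit the fact that very small \(\rno\) cannot have \(\CRC{\beta}{\Wm}{\cset}>\rate\) anywhere in the averaging window.
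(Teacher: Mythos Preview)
The paper does not actually supply a proof of this lemma; it simply states that the proof ``is identical to that of \cite[Lem \ref*{B-lem:avspherepacking}]{nakiboglu16B}.''  So there is no in-paper argument to compare against, and I can only assess your proposal on its own merits.

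Your argument is essentially correct.  The lower bound via the pointwise inequality \(\CRCI{\rno}{\Wm}{\cset}{\epsilon}\geq\CRC{\rno}{\Wm}{\cset}\) is clean, and your three-term decomposition for the upper bound checks out: the lower-half contribution is \(\tfrac{1}{\epsilon}\int g+\rate\delta\) with \(\delta=-\tfrac{\ln(1-\epsilon)}{\epsilon}-1\in[0,\tfrac{\epsilon}{2(1-\epsilon)}]\), the upper-half contribution integrates to exactly \(\spe{\rate,\Wm,\cset}[(1-\rno)+h(\rno)\delta]\), and the \(\spe{\rate,\Wm,\cset}\leq h(\rnf)\rate\) estimate together with the case split on \(\rno\gtrless\tfrac{\rnf-\epsilon}{1-\epsilon}\) finishes the bound.

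There is one minor slip.  The quadratic \(q(\rnf)=2\rnf^{2}-(2+\epsilon)\rnf+(2\epsilon-1)\) does \emph{not} vanish at the endpoints: \(q(\epsilon)=\epsilon^{2}-1\) and \(q(1)=\epsilon-1\), both strictly negative on \((0,1)\).  This does not break the argument---an upward-opening parabola attains its maximum on \([\epsilon,1]\) at an endpoint, and since both endpoint values are negative the inequality \(q(\rnf)\leq 0\) still holds throughout---but you should correct the justification.
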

Proof of Lemma \ref{lem:avspherepacking} is identical to that of 
\cite[Lem \ref*{B-lem:avspherepacking}]{nakiboglu16B}.
\clearpage

\begin{lemma}\label{lem:CCaugustin}
For any product channel \(\Wm\) for the index set \(\{1,\ldots,\blx\}\),
cost function \(\cf\) satisfying \(\cf(\dinp)=\sum_{\tin\in\tinS}\cf_{\tin}(\dinp_{\tin})\) 
for \(\cf_{\tin}:\inpS_{\tin}\to\reals{\geq0}^{\lnd}\),
\(\cc\in \inte{\fcc{\cf}}\), and integers \(M\), \(L\) satisfying 
\begin{align}
\notag
\tfrac{M}{L}
&>\tfrac{8 e^{2}(1-\rno_{0})(1-\epsilon_{1})\epsilon_{2} \blx^{2.5} }{\epsilon_{1} (1-\epsilon_{2})}
e^{\CRCI{\rno_{0}}{\Wm}{\cc}{\epsilon_{1}}+\frac{\gamma}{1-\rno_{0}}}
\\
\notag
\gamma
&\DEF 3\sqrt[\knd]{3}\left(\sum\nolimits_{\tin=1}^{\blx} 
\left((\CRC{\sfrac{1}{2}}{\Wmn{\tin}}{\cc}+\ln \tfrac{1}{\epsilon_{2}})\vee \knd \right)^{\knd} \right)^{\frac{1}{\knd}}
\end{align}
for a \(\knd\geq 3\), an \(\rno_{0}\in(0,1)\),  an \(\epsilon_{1}\in(0,1)\) 
and an \(\epsilon_{2}\in(0,1)\) 
satisfying  \(\tfrac{(\blx-1)(1-\rno_{0})(1-\epsilon_{1})}{\epsilon_{1}}\geq 1\),
any \((M,L)\) channel code \((\enc,\dec)\) on \(\Wm\) 
satisfying \(\vee_{\dmes\in\mesS}\cf(\enc(\dmes))\leq\cc\) satisfies
\begin{align}
\notag
\Pem{av}
&\geq \left(\tfrac{\epsilon_{1} e^{-2\gamma}}{8 e^{2}(1-\rno_{0})(1-\epsilon_{1}) \blx^{1.5}}\right)^{\frac{1}{\rno_{0}}}  
e^{-\spa{\epsilon_{1}}{\ln \frac{M}{L},\Wm,\cc}}.
\end{align}
\end{lemma}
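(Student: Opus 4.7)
The plan is to mirror the Renyi-framework sphere-packing argument of \cite[Thm \ref*{B-thm:productexponent}]{nakiboglu16B}, substituting the Augustin-Legendre machinery of Section \ref{sec:costconstrained} in place of Renyi-Gallager quantities throughout. For each order $\rno$ in the averaging window $[\rno_{0}-\epsilon_{1}\rno_{0},\rno_{0}+\epsilon_{1}(1-\rno_{0})]$, I would first invoke Lemma \ref{lem:Lcapacity}-(\ref{Lcapacity:interior}) to produce a Lagrange multiplier $\lm_{\rno}\!\in\!\reals{\geq0}^{\lnd}$ satisfying $\CRC{\rno}{\Wm}{\cc}=\RCL{\rno}{\Wm}{\lm_{\rno}}+\lm_{\rno}\cdot\cc$. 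Lemma \ref{lem:Lcenter} then identifies $\qmn{\rno,\Wm,\cc}$ with the A-L center $\qma{\rno,\Wm}{\lm_{\rno}}$, and Lemma \ref{lem:Lcapacityproduct} factorizes the latter as a tensor product of the single-letter A-L centers $\qma{\rno,\Wmn{\tin}}{\lm_{\rno}}$. This product structure is exactly what Lemma \ref{lem:HT} requires in its Berry-Esseen hypothesis. As a bonus, Theorem \ref{thm:Lminimax} supplies the per-codeword bound $\RD{\rno}{\Wm(\enc(\dmes))}{\qmn{\rno,\Wm,\cc}}\!\leq\!\RCL{\rno}{\Wm}{\lm_{\rno}}+\lm_{\rno}\cdot\cf(\enc(\dmes))\!\leq\!\CRC{\rno}{\Wm}{\cc}$ for every $\dmes$ satisfying the cost constraint.

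Next I would execute the classical Markov counting argument. Setting $\set{G}_{\dmes}\!=\!\{\dout:\dmes\in\dec(\dout)\}$ and $\hat{\mQ}_{\rno}\!=\!\qmn{\rno,\Wm,\cc}$, the identity $\sum_{\dmes}\hat{\mQ}_{\rno}(\set{G}_{\dmes})=L$ (from $\lvert\dec(\dout)\rvert=L$) together with Markov's inequality isolates at least $(1-\epsilon_{1})M$ messages with $\hat{\mQ}_{\rno}(\set{G}_{\dmes})\!\leq\!L/(\epsilon_{1}M)$. The lower bound on $M/L$ in the hypothesis, together with the elementary identity $(1-\rno)\RD{1}{\vma{\rno}{\mW,\mQ}}{\mQ}+\rno\RD{1}{\vma{\rno}{\mW,\mQ}}{\mW}=(1-\rno)\RD{\rno}{\mW}{\mQ}$ and the per-codeword divergence bound just derived, is calibrated so that the hypothesis of Lemma \ref{lem:HT} is satisfied for these $\dmes$ when $\mW=\Wm(\enc(\dmes))$ and $\mQ=\hat{\mQ}_{\rno}$. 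Lemma \ref{lem:HT} then delivers $\Pem{\dmes}\!\geq\!\tfrac{1}{\sqrt{16\blx}}e^{-\RD{1}{\vma{\rno}{\mW,\mQ}}{\mW}-(1-\rno)3\cm{\knd}}$, and the same algebraic identity rewrites the dominant exponent as $\tfrac{1-\rno}{\rno}(\CRC{\rno}{\Wm}{\cc}-\ln\tfrac{M}{L})$ modulo the Berry-Esseen correction.

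To turn the pointwise $\CRC{\rno}{\Wm}{\cc}$ into the averaged $\CRCI{\rno_{0}}{\Wm}{\cc}{\epsilon_{1}}$ inside $\spa{\epsilon_{1}}{\ln(M/L),\Wm,\cc}$, I would integrate this pointwise sphere-packing exponent against the weighting $\tfrac{1}{\epsilon_{1}}[1\vee\tfrac{\rno_{0}}{1-\rno_{0}}\tfrac{1-\rno}{\rno}]$ over the averaging window, reproducing the averaging step of the proof of \cite[Thm \ref*{B-thm:productexponent}]{nakiboglu16B} verbatim; the weighting was chosen precisely so the resulting bound is dominated by $\spa{\epsilon_{1}}{\ln(M/L),\Wm,\cc}$. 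To keep the Markov selection valid simultaneously over all $\rno$ in the window, I would replace the pointwise Markov step by a single application of Markov to the $\rno$-averaged quantity $\tfrac{1}{\epsilon_{1}}\int[1\vee\tfrac{\rno_{0}}{1-\rno_{0}}\tfrac{1-\rno}{\rno}]\hat{\mQ}_{\rno}(\set{G}_{\dmes})\dif{\rno}$.

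The principal obstacle is controlling the Berry-Esseen parameter $\gamma$ in the statement without assuming that $\cf$ is bounded. The $\cm{\knd}$ in Lemma \ref{lem:HT} is expressed in terms of absolute $\knd$-th central moments of $\ln\der{\wmn{\tin\sim}}{\qmn{\tin}}$ under the tilted measure $\vma{\rno}{\mW,\mQ}$, and these have to be dominated componentwise by $(\CRC{\sfrac{1}{2}}{\Wmn{\tin}}{\cc}+\ln\epsilon_{2}^{-1})\vee\knd$. Handling this requires exploiting the A-L optimality of $\lm_{\rno}$, the product factorization given by Lemma \ref{lem:Lcapacityproduct}, and standard Renyi-divergence moment estimates that convert the $\sfrac{1}{2}$-order capacity upper bound into a moment bound on the log-likelihood ratio, bypassing any uniform bound on $\cf_{\tin}$. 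A secondary technical subtlety is coordinating this moment bookkeeping with the window-averaging step so the resulting inequality holds on one common subcollection of at least $(1-\epsilon_{1})M$ codewords.
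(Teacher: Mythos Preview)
Your outline captures the Augustin--Legendre reduction correctly (Lemmas \ref{lem:Lcapacity}-(\ref{Lcapacity:interior}), \ref{lem:Lcenter}, \ref{lem:Lcapacityproduct} deliver exactly the product factorization \(\qmn{\rno,\Wm,\cc}=\oprod{\tin}\qma{\rno,\Wmn{\tin}}{\lm_{\rno}}\) you need), but the proof has a genuine gap in the moment control, and a structural divergence from the paper in how the averaging is performed.

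The gap is that \(\epsilon_{2}\) never enters your construction. You set \(\hat{\mQ}_{\rno}=\qmn{\rno,\Wm,\cc}\) and hope to recover the bound \((\CRC{\sfrac{1}{2}}{\Wmn{\tin}}{\cc}+\ln\epsilon_{2}^{-1})\vee\knd\) on the \(\knd\)-th moments of \(\ln\der{\wmn{\tin\sim}}{\qmn{\tin}}\) from ``A-L optimality of \(\lm_{\rno}\)'' and ``standard Renyi-divergence moment estimates.'' That does not work: for unbounded \(\cf_{\tin}\) the log-likelihood ratio against \(\qma{\rno,\Wmn{\tin}}{\lm_{\rno}}\) need not have moments controlled by \(\CRC{\sfrac{1}{2}}{\Wmn{\tin}}{\cc}\) at all. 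The paper's device is to \emph{mix} the per-coordinate reference measure with an \(\epsilon_{2}\)-fraction of the order-\(\sfrac{1}{2}\) center, setting \(\qmn{\rno,\tin}=(1-\epsilon_{2})\qma{\rno,\Wmn{\tin}}{\epsilon_{1}}+\epsilon_{2}\qmn{\frac{1}{2},\Wmn{\tin},\cc}\); the domination \(\qmn{\rno,\tin}\geq\epsilon_{2}\,\qmn{\frac{1}{2},\Wmn{\tin},\cc}\) is precisely what feeds \cite[Lem \ref*{B-lem:MomentBound}]{nakiboglu16B} to produce the moment bound in (\ref{eq:CCaugustin-3}), and is the source of the \(\ln\epsilon_{2}^{-1}\) term and the \(\tfrac{\blx\epsilon_{2}}{1-\epsilon_{2}}\) slack in (\ref{eq:CCaugustin-2}). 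Without this mixture the Berry--Esseen input \(\cm{\knd}\) is simply uncontrolled when \(\cf\) is unbounded---which is the entire point of the lemma relative to Augustin's bounded-cost result.

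The structural difference is that the paper does not average the per-\(\rno\) sphere-packing bounds over the window as you propose. Instead it builds the \(\epsilon_{1}\)-averaging into the reference measure itself, defining \(\qma{\rno,\Wmn{\tin}}{\epsilon_{1}}=\tfrac{1}{\epsilon_{1}}\int_{\rno-\rno\epsilon_{1}}^{\rno+(1-\rno)\epsilon_{1}}\qma{\rnf,\Wmn{\tin}}{\lm_{\rnf,\Wm,\cc}}\dif{\rnf}\) (using Lemma \ref{lem:centercontinuity} to make this a legitimate transition probability), so that convexity of \(\RD{\rno}{\cdot}{\mQ}\) in \(\mQ\) and the monotonicity \cite[Prop.\ 2]{ervenH14} directly yield \(\RD{\rno}{\enc(\dmes)}{\qmn{\rno}}\leq\CRCI{\rno}{\Wm}{\cc}{\epsilon_{1}}+\tfrac{\blx\epsilon_{2}}{1-\epsilon_{2}}\). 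The selection of \(\rno\) is then done \emph{per message} via the intermediate value theorem applied to \(\rno\mapsto\RD{1}{\vma{\rno}{\dmes}}{\qmn{\rno}}+\tfrac{\gamma}{3(1-\rno)}\), producing \(\rno_{\dmes}\in(\rno_{0},1)\); the pigeon-hole step at the end groups messages by their \(\rno_{\dmes}\). Your scheme of applying Markov to \(\int\hat{\mQ}_{\rno}(\set{G}_{\dmes})\dif{\rno}\) only bounds an average in \(\rno\), not the value at any particular \(\rno\), so it does not by itself place you in the hypothesis of Lemma \ref{lem:HT}.
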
 
\begin{proof}[Proof Sketch]
Since \(\cc\in\inte{\fcc{\cf}}\), 
\(\forall\rno\in(0,1)\exists\lm_{\rno,\!\Wm\!,\cc}\!\in\!\reals{\geq0}^{\lnd}\) s.t.
\(\CRC{\rno}{\!\Wm\!}{\cc}\!=\!\RCL{\rno}{\Wm}{\lm_{\rno,\!\Wm\!,\cc}}\!+\!\lm_{\rno,\!\Wm\!,\cc}\!\cdot\!\cc\)
by  Lem.\ref{lem:Lcapacity}-(\ref{Lcapacity:interior}).
Then \(\qmn{\rno,\!\Wm\!,\cc}\!=\!\qma{\rno,\!\Wm\!}{\lm_{\rno\!,\!\Wm\!,\cc}\!}\) by Lemma \!\ref{lem:Lcenter}.
Furthermore, 
\(\!\qma{\rno,\Wm}{\lm_{\rno,\!\Wm\!,\cc}}\!=\!\oprod{\tin}\!\qma{\rno,\Wmn{\tin}}{\lm_{\rno,\!\Wm\!,\cc}}\)
by Lemma \!\ref{lem:Lcapacityproduct}.
Then \(\qma{\!\rno,\!\Wmn{\tin}}{\lm_{\rno,\!\Wm\!,\cc}}\!:\!(0,1)\!\to\!\pmea{\!\outA_{\tin}\!}\) is continuous 
in \(\rno\) for the total variation topology on \(\pmea{\!\outA_{\tin}\!}\)
because \(\qmn{\rno,\!\Wm,\cc}\) is by Lemma \ref{lem:centercontinuity}.
Then \(\qma{\cdot,\Wmn{\tin}}{\lm_{\cdot,\cc}}\) is a transition probability from 
\(((0,1),\rborel{(0,1)})\) to \((\outS_{\tin},\outA_{\tin})\). 
We define \(\qma{\rno,\Wmn{\tin}}{\epsilon}\) as the \(\outS_{\tin}\) marginal of 
the probability measure  \(\mU_{\rno,\epsilon}\circ\qma{\cdot,\Wmn{\tin}}{\lm_{\cdot,\cc}}\) 
where \(\mU_{\rno,\epsilon}\) is the uniform probability distribution on  \((\rno-\epsilon\rno,\rno+\epsilon(1-\rno))\):
\begin{align}
\label{eq:def:avncenter}
\qma{\rno,\Wmn{\tin}}{\epsilon}
&=\tfrac{1}{\epsilon} \int_{\rno-\rno\epsilon}^{\rno+(1-\rno)\epsilon} \qma{\rnf,\Wmn{\tin}}{\lm_{\rnf,\!\Wm\!,\cc}} \dif{\rnf}.
\end{align}
Let \(\enc_{\tin}(\dmes)\) be the \(\outS_{\tin}\) marginal of \(\enc(\dmes)\)
and \(\qmn{\rno,\tin}\), \!\(\qmn{\rno}\), \!\(\vma{\rno}{\dmes}\) \!be
\begin{align}
\notag
\qmn{\rno,\tin}
&\!\DEF\!(1\!\!-\!\!\epsilon_{2})\qma{\rno,\Wmn{\tin}}{\epsilon_{1}}
\!\!+\!\epsilon_{2}\qmn{\frac{1}{2},\Wmn{\tin},\cc}
&
\qmn{\rno}
&\!\DEF\!\oprod{\tin}\!\!\qmn{\rno,\tin}
&
\vma{\rno}{\dmes} 
&\!\DEF\!\vma{\rno}{\enc(\dmes),\qmn{\rno}}.
\end{align}
By \cite[Lem \ref*{A-lem:divergenceQ}-(\ref*{A-divergenceQ-RM},\ref*{A-divergenceQ-Qconvexity})]{nakiboglu16A},
Lemma \ref{lem:Lcapacityproduct} and \(\ln \tau\leq \tau-1\) we have
\begin{align}
\notag
\RD{\rno}{\enc(\dmes)}{\qmn{\rno}}
&\!\leq\!\tfrac{\blx \epsilon_{2}}{1-\epsilon_{2}}
+\int_{\rno(1-\epsilon_{1})}^{\rno+(1-\rno)\epsilon_{1}} 
\tfrac{\RD{\rno}{\enc(\dmes)}{\qma{\rnf,\Wm}{\lm_{\rnf,\Wm,\cc}}}}{\epsilon_{1}} \dif{\rnf}.
\end{align}
Using Lemma \ref{lem:Lcenter}, \cite[Lem \ref*{A-lem:divergenceQ}-(\ref*{A-divergenceQ-order})]{nakiboglu16A}, 
\cite[Prop 2]{ervenH14}, Theorem \ref{thm:minimax}, \(\cf(\!\enc(\!\dmes\!)\!)\!\leq\!\cc\)
and the definition of \(\CRCI{\rno}{\Wm}{\cset}{\epsilon}\) we get
\begin{align}
\label{eq:CCaugustin-2}
\RD{\rno}{\enc(\dmes)}{\qmn{\rno}}
&\leq \tfrac{\blx \epsilon_{2}}{1-\epsilon_{2}}+ \CRCI{\rno}{\Wm}{\cc}{\epsilon_{1}}.
\end{align}
Let \((\enc_{\tin}(\dmes))_{\sim}\) be the component of \(\enc_{\tin}(\dmes)\) 
that is absolutely continuous in \(\qmn{\rno,\tin}\).
Furthermore, let
\(\cla{\rno,\tin}{\dmes}\) and \(\cla{\rno}{\dmes}\) be 
\begin{align}
\notag
\cla{\rno,\tin}{\dmes}
&\DEF\!\ln\!\der{(\enc_{\tin}(\dmes))_{\sim}}{\qmn{\rno,\tin}}
\!-\!\EXS{\vma{\rno}{\dmes}}{\!\ln \der{(\enc_{\tin}(\dmes))_{\sim}}{\qmn{\rno,\tin}}\!}
&&&
\cla{\rno}{\dmes}
&\!\DEF\!\sum\nolimits_{\tin=1}^{\blx}\!\cla{\rno,\tin}{\dmes}.
\end{align}
Then using 
\cite[Lem \ref*{B-lem:MomentBound}]{nakiboglu16B},
\cite[Lem \ref*{A-lem:divergenceQ}-(\ref*{A-divergenceQ-RM},\ref*{A-divergenceQ-order})]{nakiboglu16A},
\cite[Prop 2]{ervenH14},
and
Theorem \ref{thm:minimax} we get 
\begin{align}
\notag
\EXS{\vma{\rno}{\dmes}}{\abs{\cla{\rno,\tin}{\dmes}}^{\knd}}^{\frac{1}{\knd}}
&\leq 3^{\frac{1}{\knd}} \tfrac{(\CRC{\sfrac{1}{2}}{\Wmn{\tin}}{\cc}-\ln\epsilon_{2})\vee \knd}{\rno (1-\rno)}
&
&\forall\knd\in\reals{+},\rno\in(0,1).
\end{align}
Then using the definition of \(\gamma\)  we get
\begin{align}
\label{eq:CCaugustin-3}
\left[\sum\nolimits_{\tin=1}^{\blx} \EXS{\vma{\rno}{\dmes}}{\abs{\cla{\rno,\tin}{\dmes}}^{\knd}}\right]^{\frac{1}{\knd}}
&\leq \tfrac{\gamma}{3\rno(1-\rno)}.
\end{align}
On the other hand, \(\forall\!\dmes\!\in\!\mesS\!,\rno\!\in\!(0,1)\!\) by the definition of \(\vma{\rno}{\dmes}\)
\begin{align}
\label{eq:CCaugustin-4}
\RD{1}{\vma{\rno}{\dmes}}{\qmn{\rno}}
&=\RD{\rno}{\enc(\dmes)}{\qmn{\rno}}-\tfrac{\rno}{1-\rno}\RD{1}{\vma{\rno}{\dmes}}{\enc(\dmes)}.
\end{align}
Thus we can bound \(\RD{1}{\vma{\rno}{\dmes}}{\qmn{\rno}}\) using the non-negativity of the Renyi divergence, i.e. \cite[Thm 8]{ervenH14},
and equation (\ref{eq:CCaugustin-2}) as
\(\RD{1}{\vma{\rno}{\dmes}}{\qmn{\rno}}\leq \tfrac{\blx\epsilon_{2}}{1-\epsilon_{2}}+ \CRCI{\rno}{\Wm}{\cc}{\epsilon_{1}}\). Hence,
\begin{align}
\notag
\lim_{\rno \to \rno_{0}} 
\RD{1}{\vma{\rno}{\dmes}}{\qmn{\rno}}+\tfrac{\gamma}{3(1-\rno)}
&<\ln \tfrac{M}{L}+\ln \tfrac{\epsilon_{1}}{8e^{2}(1-\rno_{0})(1-\epsilon_{1}) \blx^{1.5}}
\\
\notag
\lim_{\rno \to 1} 
\RD{1}{\vma{\rno}{\dmes}}{\qmn{\rno}}+\tfrac{\gamma}{3(1-\rno)}
&=\infty.
\end{align}
\(\RD{1}{\vma{\rno}{\dmes}}{\qmn{\rno}}\) is continuous in \(\rno\) by
\cite[Lem \ref*{B-lem:tilting}]{nakiboglu16B}, then
by the intermediate value theorem \cite[\!4.23]{rudin}
\(\!\forall\!\dmes\!\in\!\mesS\exists\rno_{\dmes}\!\in\!(\rno_{0},1)\) \!s.\!t.
\begin{align}
\notag
\RD{1}{\vma{\rno}{\dmes}}{\qmn{\rno}}+\tfrac{\gamma}{3(1-\rno)}\vert_{\rno=\rno_{\dmes}} 
&=\ln \tfrac{M}{L}+\ln \tfrac{\epsilon_{1}}{8 e^{2}(1-\rno_{0})(1-\epsilon_{1}) \blx^{1.5}}.
\end{align}
Lemma \ref{lem:CCaugustin} follows from Lemma \ref{lem:HT} through a pigeon hole argument
similar to the one invoked in \cite[eq (\ref*{B-eq:augustinM-9})-(\ref*{B-eq:augustinM-10})]{nakiboglu16B}.
\end{proof}
If \(\Wm\) is stationary and memoryless  Lemma \!\ref{lem:CCaugustin} can be proved 
\(\forall\!\cc\!\in\!\fcc{\cf}\)
by setting \(\qma{\rno,\Wmn{\tin}}{\epsilon}=\int \mU_{\rno,\epsilon}\circ\qmn{\rnf,\Wmn{\tin},\frac{\cc}{\blx}} \dif{\rnf}\).
Furthermore, bound given in (\ref{eq:lem:Saugustin}) can be obtained for codes satisfying a convex empirical 
distribution constraint \(\cset\subset\pdis{\inpS_{1}}\) by setting
\(\qma{\rno,\Wmn{\tin}}{\epsilon}\!\!=\!\!\!\int\!\!\mU_{\rno,\epsilon}\!\circ\!\qmn{\rnf,\!\Wmn{\tin},\!\cset}\!\dif{\rnf}\)
and \(\qmn{\rno,\tin}
\!=\!(\!1\!-\!\epsilon_{2}\!)\!\qma{\rno,\!\Wmn{\tin}}{\epsilon_{1}}\!+\!\epsilon_{2}\qmn{\frac{1}{2},\Wmn{\tin},\set{B}_{\cset}}\)
where \(\set{B}_{\cset}\DEF\pdis{\{\dinp\in\inpS_{1}:\exists\mP\in\cset\mbox{~s.t.~}\mP(\dinp)>0\}}\).
\begin{align}
\notag
\tilde{\gamma}
&=3\sqrt[\knd]{3\blx}
\left((\CRC{\sfrac{1}{2}}{\Wmn{1}}{\set{B}(\cset)}+\ln \tfrac{1}{\epsilon_{2}})\vee \knd \right)
\\
\label{eq:lem:Saugustin}
\Pem{av}
&\geq \left(\tfrac{\epsilon_{1} e^{-2\tilde{\gamma}}}{8 e^{2}(1-\rno_{0})(1-\epsilon_{1}) \blx^{1.5}}\right)^{\frac{1}{\rno_{0}}}  
e^{-\blx\spa{\epsilon_{1}}{\frac{1}{\blx}\ln \frac{M}{L},\Wmn{1},\cset}}.
\end{align}
\section*{Acknowledgment}
Author would like to thank Fatma Nakibo\u{g}lu and Mehmet Nakibo\u{g}lu for their hospitality; 
this work simply would not have been possible without it.  


\end{document}